\theoremstyle{plain}
\newtheorem{theorem}{Theorem}
\newtheorem{lemma}{Lemma}
\newtheorem{corollary}{Corollary}
\newtheorem{claim}{Claim}
\newtheorem{fact}{Fact}
\theoremstyle{remark}
\newtheorem{observation}{Observation}
\theoremstyle{definition}
\newtheorem{definition}{Definition}
\newcommand{\pre}{\text{pre}}
\newcommand{\suf}{\text{suf}}
\newcommand {\ov}{\text{ov}}
\newcommand{\setof}[2]{\{#1\mid#2\}}
\newcommand{\N}{\mathbb{N}\xspace}
\newcommand{\Oh}{\mathcal{O}}
\title{Bipartite Graphs of Small Readability} 
\author{Rayan Chikhi\thanks{CNRS, UMR 9189. {\tt rayan.chikhi@univ-lille1.fr}} \and Vladan Jovi\v{c}i\'c\thanks{ENS Lyon, France. {\tt vladan94.jovicic@gmail.com}} \and Stefan Kratsch\thanks{Institut f\"{u}r Informatik, Humboldt-Universit\"{a}t zu Berlin. {\tt kratsch@informatik.hu-berlin.de}} \and Paul Medvedev\thanks{The Pennsylvania State University, USA. {\tt pashadag@cse.psu.edu}} \and Martin Milani\v{c}\thanks{IAM and FAMNIT, University of Primorska, Koper, Slovenia. {\tt martin.milanic@upr.si}} \and Sofya Raskhodnikova\thanks{Boston University, USA. {\tt \{sofya, nvarma\}@bu.edu}} \and Nithin Varma\footnotemark[6]}
\begin{document}

\maketitle

\begin{abstract}
We study a parameter of bipartite graphs called readability, introduced by Chikhi et al.\ ({\em Discrete Applied Mathematics}, 2016) and motivated by applications of overlap graphs in bioinformatics. The behavior of the parameter is poorly understood. The complexity of computing it is open and it is not known whether the decision version of the problem is in NP. The only known upper bound on the readability of a bipartite graph (following from a work of Braga and Meidanis, {\em LATIN} 2002) is exponential in the maximum degree of the graph.

Graphs that arise in bioinformatic applications have low readability.
In this paper we focus on graph families with readability $o(n)$, where $n$ is the number of vertices. We show that the readability of $n$-vertex bipartite chain graphs is between $\Omega(\log n)$ and $\Oh(\sqrt{n})$. We give an efficiently testable characterization of bipartite graphs of readability at most~2 and completely determine the readability of grids, showing in particular that their readability never exceeds~$3$. As a consequence, we obtain a polynomial time algorithm to determine the readability of induced subgraphs of grids.
One of the highlights of our techniques is the appearance of Euler's totient function in the analysis of the readability of bipartite chain graphs. We also develop a new technique for proving lower bounds on readability, which is applicable to dense graphs with a large number of distinct degrees.
\end{abstract}

\section{Introduction}

In this work we further the study of {\em readability} of bipartite graphs initiated by Chikhi et al.~\cite{ChikhiMMR16}. Given a bipartite graph $G=(V_s,V_p,E)$, an {\em overlap labeling} of $G$ is a mapping from vertices to strings, called labels, such that for all $u \in V_s$ and $v \in V_p$ there is an edge between $u$ and $v$ if and only if the label of $u$ {\em overlaps} with the label of $v$ (i.e., a non-empty suffix of $u$'s label is equal to a prefix of $v$'s label). The {\em length} of an overlap labeling of $G$ is the maximum length (i.e., number of characters) of a label. The {\em readability} of $G$, denoted $r(G)$, is the smallest nonnegative integer $r$ such that there is an overlap labeling of $G$ of length $r$. We emphasize that in this definition, no restriction is placed on the alphabet.
One could also consider variants of readability parameterized the size of the alphabet. A result of Braga and Meidanis~\cite{BM02} implies that these variants are within constant factors of each other, where the constants are logarithmic in the alphabet sizes.

The notion of readability arises in the study of overlap digraphs. 
Overlap digraphs constructed from DNA strings have various applications in bioinformatics.\footnote{In the context of genome assembly, variants of overlap digraphs appear as either de Bruijn graphs~\cite{IW95} or string graphs~\cite{M05,SGA} and are the foundation of most modern assemblers (see \cite{MKS10,NP13} for a survey).
	Several graph-theoretic parameters of overlap digraphs have been studied~\cite{BFKSW02,BFKK02,BHKW99,GP14,LZ07,LZ10,PSW03,TU88},
	with a nice survey in~\cite{K16}.} 
Most of the graphs that occur as the overlap graphs of genomes have low readability. 
Chikhi et al.~\cite{ChikhiMMR16} show that the readability of overlap digraphs is asymptotically equivalent to that of balanced bipartite graphs: there is a bijection between overlap digraphs and balanced bipartite graphs that preserves readability up to (roughly) a factor of $2$.
This motivates the study of bipartite graphs with low readability. In this work we derive several results about bipartite graphs with readability sublinear in the number of vertices.

For general bipartite graphs, the only known upper bound on readability is implicit in a paper on overlap digraphs by Braga and Meidanis~\cite{BM02}. As observed by Chikhi et al.~\cite{ChikhiMMR16},  it follows from~\cite{BM02} that the readability of a bipartite graph is well defined and at most~$2^{\Delta+1}-1$, where $\Delta$ is the maximum degree of the graph. Chikhi et al.~\cite{ChikhiMMR16} showed that almost all bipartite graphs with $n$ vertices in each part have readability $\Omega(n/\log n)$. They also constructed an explicit graph family (called Hadamard graphs) with readability $\Omega(n)$.

For trees, readability can be defined in terms of an integer function on the edges, without any reference to strings or their overlaps~\cite{ChikhiMMR16}.
In this work, we reveal another connection to number theory, through Euler's totient function, and use it to prove an upper bound on the readability of bipartite chain graphs.

So far, our understanding of readability has been hindered by the difficulty of proving lower bounds. Chikhi et al.~\cite{ChikhiMMR16} developed a lower bound technique for graphs where the overlap between the neighborhoods of any two vertices is limited. In this work, we add another technique to the toolbox. Our technique is applicable to dense graphs with a large number of distinct degrees. We apply this technique to obtain a lower bound on readability of  bipartite chain graphs.

We give a characterization of bipartite graphs of readability at most~$2$ and use this characterization to obtain a polynomial time algorithm for checking if a graph has readability at most~$2$. This is the first nontrivial result of this kind: graphs of readability at most~$1$ are extremely simple (disjoint unions of complete bipartite graphs, see~\cite{ChikhiMMR16}), whereas the problem of recognizing graphs of readability~3 is open.

We also give a formula for the readability of grids, showing in particular that their readability never exceeds~$3$. As a corollary, we obtain a polynomial time algorithm to determine the readability of induced subgraphs of grids.

\subsection{Our Results and Structure of the Paper}

Preliminaries are summarized in Section~\ref{sec:def}; here we only state some of the most important technical facts. 
In the study of readability, it suffices to consider bipartite graphs that are connected and {\em twin-free}. A bipartite graph is twin-free if no two vertices in the same part have the same sets of neighbors~\cite{ChikhiMMR16}.
Since connected bipartite graphs have a unique bipartition up to swapping the two parts, some of our results are stated without specifying the bipartition.
\paragraph{Bounds on the readability of bipartite chain graphs (Section~\ref{sec:cnn-bounds}).}
Bipartite chain graphs are the bipartite analogue of a family of digraphs that occur naturally as subgraphs of overlap graphs of genomes.
A {\it bipartite chain graph} is a bipartite graph $G = (V_s,V_p,E)$ such that the vertices in $V_s$ (or $V_p$) can be linearly ordered with respect to inclusion of their neighborhoods.
That is,
we can write $V_s = \{v_1,\ldots, v_k\}$ so that $N(v_1)\subseteq \ldots\subseteq N(v_k)$
(where $N(u)$ denotes the set of $u$'s neighbors).
A twin-free connected bipartite chain graph must have the same number of vertices on either side. For each $n \in \N$, there is, up to isomorphism, a unique connected twin-free
bipartite chain graph with $n$ vertices in each part, denoted $C_{n,n}$.
The graph $C_{n,n}$ is $(V_s,V_p,E)$ where $V_s =\{s_1,\ldots,s_n\},
V_p=\{p_1,\ldots,p_n\}$, and $E=\{(s_i,p_j) \mid 1\le i\leq j\le n\}$. The graph $C_{4,4}$ is shown in Figure~\ref{fig:chainGraph}.
We prove an upper and a lower bound on the readability of $C_{n,n}$.
\begin{figure}[t]
	\begin{center}
		\includegraphics[width=0.25\linewidth]{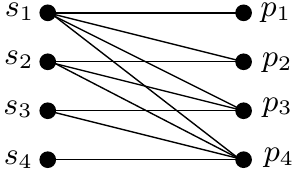}
	\end{center}
	\caption{The graph $C_{4,4}$}\label{fig:chainGraph}
\end{figure}
\begin{theorem}\label{theorem:ub-bip-chain-graphs}
	For all $n \in \N$, the graph $C_{n,n}$ has readability $\Oh(\sqrt{n})$, with labels over an alphabet of size 3.
\end{theorem}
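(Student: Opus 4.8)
The plan is to use a base-$\sqrt n$ decomposition of the index set, which exhibits $C_{n,n}$ as a ``self-similar'' object. Put $k=\lceil\sqrt n\rceil$. It suffices to treat $n=k^2$, because $C_{n,n}$ is an induced subgraph of $C_{k^2,k^2}$ and readability does not increase under taking induced subgraphs (just restrict the labeling). For $n=k^2$, write each index $i\in\{1,\dots,n\}$ as $i=(q_i-1)k+r_i$ with $q_i,r_i\in\{1,\dots,k\}$, so that $s_i\sim p_j$ exactly when $q_i<q_j$, or $q_i=q_j$ and $r_i\le r_j$. Equivalently, $C_{k^2,k^2}$ is the ``lexicographic substitution'' of $C_{k,k}$ into itself: blow up each vertex of $C_{k,k}$ into a block of $k$ vertices, place a complete bipartite graph between any two distinct blocks joined by an edge of $C_{k,k}$, and a fresh copy of $C_{k,k}$ inside each diagonal block.

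I would then build an overlap labeling of $C_{k^2,k^2}$ over $\{0,1,\#\}$, with $\#$ serving as a separator, in which each label is a concatenation of a ``coarse'' part encoding the block index $q$ and a ``fine'' part encoding the within-block position $r$, each of length $\Oh(k)$ --- for instance $L(s_i)=\mathrm{coarse}_s(q_i)\,\#\,\mathrm{fine}_s(r_i)$ and symmetrically for $p_j$, where the coarse and fine pieces are the strict and non-strict variants of a length-$k$ overlap labeling of $C_{k,k}$ (a binary one is easy to write down, e.g.\ $s_a\mapsto 0^{k-a}10^{a-1}$ and $p_b\mapsto 10^{b-1}$, which overlap iff $a\le b$). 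Correctness is then checked by the routine case analysis on where a candidate overlap is aligned relative to the single $\#$ in each of the two labels: an overlap lying entirely on one side of both separators should realize the comparison $r\le r'$, but only under the guarantee that the coarse parts agree; an overlap straddling the separators should realize the strict comparison $q<q'$; and the separator symbol, together with a suitable structural property imposed on the coarse pieces (ensuring that the only overlaps among them are the intended ones), must rule out every other, spurious overlap. Since all labels have length $\Oh(k)=\Oh(\sqrt n)$ over $\{0,1,\#\}$, this would prove the theorem. (Note that the scheme is applied only once, not recursively: a second application would shrink only the fine parts and leave the additive $\Theta(\sqrt n)$ coming from the coarse parts, consistent with the order of growth of $r(C_{n,n})$ being still open.)

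The step I expect to be the real obstacle is designing the coarse/fine pieces and the separator discipline so that the within-block comparison $r\le r'$ is activated \emph{exactly} when $q=q'$ --- that is, so that adjacency becomes the disjunction $q<q'$ \emph{or} ($q=q'$ and $r\le r'$) rather than some product condition --- while simultaneously preventing any unintended suffix--prefix coincidences among the pieces making up the two labels (a crossing overlap necessarily juxtaposes a coarse piece with a fine piece, and these interactions must be made degenerate in a controlled way). Doing all of this with an alphabet of size exactly $3$, instead of invoking the generic alphabet reduction of Braga and Meidanis at the cost of a logarithmic factor, is the delicate point; getting the arithmetic of the block sizes to cooperate here is, I would expect, where Euler's totient function enters the analysis.
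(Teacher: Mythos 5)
There is a genuine gap: what you have written is a plan, not a construction, and the one step you yourself flag as ``the real obstacle'' --- designing the pieces and separator discipline so that adjacency becomes exactly the disjunction $q<q'$ or ($q=q'$ and $r\le r'$), with no spurious suffix--prefix coincidences --- is precisely the content of the theorem and is left unresolved. Worse, the specific format you sketch, $L(s_i)=\mathrm{coarse}_s(q_i)\,\#\,\mathrm{fine}_s(r_i)$ and $L(p_j)=\mathrm{coarse}_p(q_j)\,\#\,\mathrm{fine}_p(r_j)$ with fixed-length pieces, cannot realize the within-block comparison at all. Since $\#$ is a unique letter, any overlap containing a separator must align the two separators, and with fixed-length pieces this forces the overlap length to equal $|\mathrm{coarse}_p|+1+|\mathrm{fine}_s|$; the resulting condition is full equality of the aligned pieces, which encodes $r_i=r_j$ rather than $r_i\le r_j$. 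On the other hand, any separator-free overlap matches a suffix of $\mathrm{fine}_s(r_i)$ against a prefix of $\mathrm{coarse}_p(q_j)$ (a prefix of $L(p_j)$ necessarily starts in the coarse part), i.e.\ exactly the fine-versus-coarse interactions you declare ``degenerate.'' So nothing in the scheme as stated creates the edges with $q_i=q_j$, $r_i\le r_j$, and repairing this requires genuinely new ideas (variable-length pieces, asymmetric label formats, extra fields for the strict cross-block comparison), together with a full no-spurious-overlap verification --- none of which is supplied. Your reduction to $n=k^2$ via induced subgraphs and your binary length-$k$ labeling of $C_{k,k}$ are fine, but they are the easy parts.

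For comparison, the paper's proof does not use a base-$\sqrt n$ block decomposition at all. It builds a single ``forward-matching'' sequence of strings (each string overlaps exactly the strings that come later, and none overlaps itself properly) by an insertion process: starting from $(20,0,01)$, it repeatedly inserts the concatenation $s_is_{i+1}$ between consecutive strings, and a lemma shows this preserves the forward-matching property. Assigning the $i$-th string to both $s_i$ and $p_i$ gives the labeling. The quadratic growth of the number of strings of length at most $r$ --- hence the $\Oh(\sqrt n)$ bound --- is obtained by showing that the sequence of string lengths evolves exactly like the Stern--Brocot-style insertion of $r$ between coprime neighbors summing to $r$, so its size is $1+\sum_{k\le r}\varphi(k)=\Theta(r^2)$. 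In particular, Euler's totient function enters through counting coprime neighbor pairs in this insertion sequence, not through any ``arithmetic of block sizes'' as you conjecture.
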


We prove Theorem~\ref{theorem:ub-bip-chain-graphs} by giving an efficient
algorithm that constructs an overlap labeling of $C_{n,n}$ of length $\Oh(\sqrt{n})$ using strings over an alphabet of size~$3$.

\begin{theorem}\label{theorem:lb-bip-chain-graphs}
	For all $n \in \N$, the graph $C_{n,n}$ has readability $\Omega(\log n)$.
\end{theorem}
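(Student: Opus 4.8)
The plan is to show that any overlap labeling of $C_{n,n}$ of length $r$ secretly encodes a rooted tree of depth at most $r$ that is forced to be "tall'', giving $r \ge \log_2 n - O(1)$ and hence $r(C_{n,n})=\Omega(\log n)$.

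First I would fix an overlap labeling of length $r=r(C_{n,n})$ and write $\sigma_i$ for the label of $s_i$ and $\pi_j$ for the label of $p_j$; recall that $N(s_i)=\{p_i,\dots,p_n\}$. Build the trie $T$ of the strings $\pi_1,\dots,\pi_n$: its nodes are the prefixes of the $\pi_j$'s, the depth of a node equals the length of the corresponding string, so $T$ has depth at most $r$. For a node $w$ of $T$ let $\beta(w)=\{\,j : w\text{ is a prefix of }\pi_j\,\}$ be the set of leaves below $w$; these sets are nested along root-to-leaf paths, and two of them are either nested or disjoint. By definition of overlap, $p_j\in N(s_i)$ iff some nonempty suffix of $\sigma_i$ is a prefix of $\pi_j$, i.e.\ iff some nonempty suffix of $\sigma_i$ is a node of $T$ on the root-to-$\pi_j$ path. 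Since $\sigma_i$ has exactly one suffix of each length, the set $W_i$ of nonempty suffixes of $\sigma_i$ that are (non-root) nodes of $T$ contains at most one node per depth, so $|W_i|\le r$ and
\[
\{p_i,p_{i+1},\dots,p_n\}\;=\;N(s_i)\;=\;\bigcup_{w\in W_i}\beta(w).
\]
Discarding redundant (contained) leaf-sets, this exhibits each suffix $\{i,\dots,n\}$ of $[n]$ as a disjoint union of at most $r$ subtree-leaf-sets of $T$ lying at pairwise distinct depths.

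The heart of the proof is then a purely combinatorial claim: if $T$ is a rooted tree of depth at most $L$ whose leaf-sets are indexed by $[n]$ in an arbitrary order, and if for every $i$ the set $\{i,\dots,n\}$ is a disjoint union of at most one subtree-leaf-set from each depth, then $n\le 2^{O(L)}$. Applying this with $L=r$ yields $r=\Omega(\log n)$. I would prove the claim by induction on $L$. The depth-$1$ leaf-sets form a partition $[n]=P_1\sqcup\cdots\sqcup P_m$, and from the decomposition of $\{1,\dots,n\}$ (one leaf-set per depth, each $P_a$ must be covered) one gets $m\le L$. Next, for a part $P_a$ every \emph{proper} suffix of $P_a$ (in the induced order) must be produced entirely inside the subtree rooted at $P_a$, using one leaf-set per depth from the remaining levels; the crucial point is that a single request uses at most one leaf-set per depth \emph{globally}, so the levels spent inside distinct parts are disjoint. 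Hence a part that is itself a single leaf-set of depth $d$ has only $L-d$ levels available beneath it, so by induction $|P_a|\le g(L-d)+1$, where $g$ denotes the optimal bound. Summing the part sizes against their (distinct) depths gives a recursion of the form $g(L)\le g(L-1)+\sum_{k\le L-2}(g(k)+1)$, which solves to $g(L)=2^{O(L)}$.

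The main obstacle is exactly this combinatorial lemma. The delicate points are: (i) the leaf-sets of $T$ need not be intervals of $\{p_1,\dots,p_n\}$, so one cannot argue by a naive left-to-right sweep; (ii) the "at most one leaf-set per depth'' condition is indispensable — without it the statement is false, since every suffix is trivially a disjoint union of singleton leaf-sets — and it is precisely this condition that couples the recursions for the different parts, as above; and (iii) one must carefully handle the "boundary'' suffixes that remain entirely inside the part containing $p_n$, where one level of slack is lost and the induction is on $n$ rather than on $L$. I expect the bookkeeping needed to turn the above heuristic recursion into a clean single-exponential bound, rather than any one decisive idea, to be where the real work lies.
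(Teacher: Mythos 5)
Your proposal is sound but follows a genuinely different route from the paper. The paper never inspects the labels of an optimal labeling directly: it invokes the result of Chikhi et al.\ that readability is at least the minimum size of a HUB decomposition, shows that in any HUB decomposition of $C_{n,n}$ the level $G_{h-i}$ (counted from the top) has maximum degree at most $2^i$ (this uses the twin-preservation condition together with the fact that all vertices on one side of $C_{n,n}$ have distinct degrees), and finishes by edge counting: $n(n+1)/2\le n(2^h-1)$, hence $h\ge \log_2(n+3)-1$. That argument uses only density and distinct degrees, which is why the paper advertises it as a technique for dense graphs with many distinct degrees. You instead argue directly from the labeling: encode the $p$-side labels in a trie and reduce to a laminar exact-cover statement, namely that every suffix $\{i,\dots,n\}$ is a disjoint union of subtree-sets with at most one per depth. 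This reduction is correct (the $\pi_j$ are pairwise distinct since $C_{n,n}$ is twin-free, the maximal sets of a laminar family are pairwise disjoint with the same union, and the union equals $N(s_i)$ exactly because non-edges must not overlap), and your combinatorial claim is true, in fact with the clean bound $n\le 2^L$. Moreover, the bookkeeping you anticipate can be avoided: define $g(\lambda)$ as the largest number of marked elements in a subtree with $\lambda$ levels below its root such that every \emph{proper} suffix (in the induced order) is a disjoint union of subtree-sets at pairwise distinct positive relative depths; the cover of the largest proper suffix consists of sets at distinct depths $d_1<\dots<d_s$, and each of these inherits the same property within its own subtree (any set of a later cover meeting it but not containing it must be rooted strictly below it), giving $g(\lambda)\le 1+\sum_{k=0}^{\lambda-1}g(k)\le 2^{\lambda}$ and hence $n\le 2^{r}$, i.e.\ $r\ge\log_2 n$. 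Two small corrections to your sketch: the cross-part ``coupling'' you call indispensable is nothing more than the statement that the sets used for a single request lie at pairwise distinct depths (no condition linking different requests is needed), and the recursion is cleanest when run on the sets of the cover of the largest proper suffix (which sit at assorted depths), not on the depth-$1$ parts, which also removes the ``boundary suffix'' worry. In terms of trade-offs: the paper's proof is shorter, reuses existing machinery, and generalizes to other dense graphs with many distinct degrees; yours is self-contained (no HUB lower bound needed) and exploits the chain structure explicitly, but is tailored to graphs whose neighborhoods form a chain.
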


\paragraph{Characterization of bipartite graphs with readability at most~$2$ (Section~\ref{sec:readability-2-characterization}).}
Let $C_t$ for $t \in \N$ denote the simple cycle with $t$ vertices. The {\em domino} is the graph obtained from the cycle $C_6$ by adding an edge between two diametrically opposite vertices. For a graph $G$ and a set $U \subseteq V(G)$, let $G[U]$ denote the subgraph of $G$ induced by $U$.


Chikhi et al.~\cite{ChikhiMMR16} proved that every bipartite graph with readability at most~$1$ is a disjoint union of complete bipartite graphs (also called bicliques).
The characterization in the following theorem
extends our understanding to graphs of readability at most~$2$. Recall that a {\em matching} in a graph is a set of pairwise disjoint edges.

\begin{theorem}\label{thm:readability-2-characterization}
	A twin-free bipartite graph $G$ has readability at most~$2$ if and only if~$G$ has a matching $M$ such that the graph $G' = G - M$ satisfies the following properties:
	\begin{enumerate}
		\item $G'$ is a disjoint union of complete bipartite graphs.
		\item For $U\subseteq V(G)$, if $G[U]$ is a $C_6$, then $G'[U]$ is the disjoint union of three edges.
		\item For $U\subseteq V(G)$, if $G[U]$ is a domino, then $G'[U]$ is the disjoint union of a $C_4$ and an edge.
	\end{enumerate}
\end{theorem}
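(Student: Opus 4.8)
The plan is to prove the two directions separately, and in both directions the matching $M$ will consist precisely of the "shortest'' overlaps, i.e.\ the edges $(u,v)$ whose label overlap uses a single character. For the forward direction, suppose $G$ has readability at most $2$; fix an overlap labeling $\ell$ of length at most $2$. Since $G$ is twin-free, no vertex can receive the empty label (two vertices with empty label would be twins of everyone they can overlap with, or isolated), so every label is a single character or a string of length $2$. For an edge $(u,v)$ with $u\in V_s$, $v\in V_p$, the overlap of $\ell(u)$ and $\ell(v)$ has length $1$ or $2$; let $M$ be the set of edges with overlap exactly $1$. First I would check that $M$ is a matching: if $u$ had two incident edges in $M$, say to $v$ and $v'$, then the last character of $\ell(u)$ equals the first character of both $\ell(v)$ and $\ell(v')$; combined with a short case analysis on lengths (and using twin-freeness to rule out $\ell(v)=\ell(v')$) one derives a contradiction — the cleanest route is to show that $u$'s neighborhood would then be forced to be "large'' in a way incompatible with the length-$2$ constraint, or more directly that $v$ and $v'$ become twins. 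The analogous argument works on the $V_p$ side. Then I would verify the three structural properties of $G'=G-M$: for property 1, an edge of $G'$ corresponds to a length-$2$ overlap, which forces $\ell(u)=\ell(v)=$ the same $2$-character string $ab$; grouping vertices by their label shows each component of $G'$ is a complete bipartite graph, and one must also argue there are no longer paths, i.e.\ that $\ell(u)=ab$, $\ell(v)=ab$, $\ell(v)$ overlapping $\ell(u')$ with length $2$ forces $\ell(u')=ab$ as well. Properties 2 and 3 are then obtained by taking the six (resp.\ six) vertices of a $C_6$ (resp.\ domino), writing down which edges must be "short'' and which "long'' under the constraint that the labels have length $\le 2$, and checking that only the claimed configuration of $G'[U]$ survives — this is a finite case check once the labels of a $6$-cycle of overlaps are pinned down.

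For the converse, suppose $G$ has a matching $M$ with $G'=G-M$ satisfying the three properties; I would construct an explicit overlap labeling of length $2$. The idea is: each complete bipartite component $B$ of $G'$ gets its own fresh alphabet symbol $a_B$; a vertex $u\in V_s$ in component $B$ that is \emph{not} matched by $M$ gets label $a_B a_B$ if $u\in V_s$-side... more carefully, a vertex $v\in V_p\cap B$ unmatched by $M$ gets label $a_B a_B$, a vertex $u\in V_s\cap B$ unmatched by $M$ gets label $a_B a_B$ as well, so that all edges of $B$ are realized by a length-$2$ overlap. For a matched edge $(u,v)\in M$ with $u\in V_s\cap B_u$, $v\in V_p\cap B_v$, I would set $\ell(u)=a_{B_u} a_{B_v}$ and $\ell(v)=a_{B_v} a_{B_u}$ (or symmetric), so that $\ell(u)$ and $\ell(v)$ overlap in the single character $a_{B_v}$ — giving the edge $(u,v)$ — while simultaneously keeping $\ell(u)$'s overlap with the unmatched $V_p$-vertices of $B_u$ (which is again length $1$ via the suffix... ) working out. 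The delicate point of the construction is to make sure that \emph{no spurious overlaps} are created: two vertices on opposite sides whose labels happen to share a boundary symbol but which are non-adjacent in $G$. This is exactly where properties 2 and 3 are needed: they are precisely the local obstructions that would otherwise force a "double overlap'' (an edge coming from both a shared component symbol and a matching symbol) and thereby a spurious edge or a missing one. So I would prove a lemma of the form: with the labeling above, $\ell(u)$ overlaps $\ell(v)$ $\iff$ $(u,v)$ is an edge of $B$ (same component, overlap via $a_B$) or $(u,v)\in M$; and then show that the only way this can fail is the presence of an induced $C_6$ or domino on which $G'$ does not look as prescribed, contradicting property 2 or 3.

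The main obstacle I anticipate is the converse direction's case analysis ruling out spurious edges, because the matching edge symbols $a_{B_v}$ can chain together across several components and create overlaps between vertices several matching-edges apart. Concretely, if $(u,v)\in M$ with $u\in B_1$, $v\in B_2$, and $(u',v')\in M$ with $u'\in B_2$, $v'\in B_3$, and $v\in B_2$, $u'\in B_2$ are adjacent in $G'$ (hence $\ell(v)$ and $\ell(u')$ both "contain'' $a_{B_2}$), then one must check whether $\ell(u)$ accidentally overlaps $\ell(v')$; tracing this through is exactly a path of length $4$ or $5$ whose endpoints close up into a $C_6$ or a domino when there is an edge back, and the structural hypotheses 2–3 are what forbid the bad closure. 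Getting the orientation of each matching edge's label (which endpoint gets $a_{B_u}a_{B_v}$ versus $a_{B_v}a_{B_u}$) globally consistent so that all these potential conflicts resolve simultaneously is the real work; I would likely handle it by a careful but finite local analysis around each matched edge, using that $M$ is a matching to keep the interactions $2$-local, plus properties 2 and 3 to kill the $3$-local obstructions. The forward direction, by contrast, I expect to be routine once the empty-label and twin-freeness observations are in place.
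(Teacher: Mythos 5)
Your overall skeleton (split the edges by overlap length, show one class is a matching and the other a union of bicliques, and in the converse build labels from component identifiers, using the $C_6$/domino conditions to kill spurious overlaps) is the right one, but you have the two classes swapped, and this breaks both directions. In the forward direction you take $M$ to be the edges realized by overlaps of length exactly $1$; these do \emph{not} form a matching in general, and your suggested argument fails: if $u\in V_s$ has $\ell(u)$ ending in $a$ and two neighbors with labels $ab$ and $ac$, those neighbors are not twins (a vertex labeled $ab$ on the other side distinguishes them), so no contradiction arises. Indeed, in a twin-free graph at most one edge at each vertex can be an equal-label edge, so every vertex of degree at least $3$ (e.g.\ in the grid $G_{2,3}$, which has readability $2$) has at least two incident length-$1$-overlap edges. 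The correct choice is the opposite one: $M$ is the set of edges whose only overlap has length $2$, i.e.\ $\ell(u)=\ell(v)$; there twin-freeness immediately forces $M$ to be a matching, and the length-$1$ edges (grouped by the shared boundary character) form the disjoint bicliques, with the $C_6$/domino conditions following from a short label chase exactly as you anticipate.

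The same swap sinks your converse construction. You try to realize a matching edge $(u,v)$, $u\in B_u$, $v\in B_v$, by a length-$1$ overlap via the symbol $a_{B_v}$, setting $\ell(u)=a_{B_u}a_{B_v}$. But then the suffix of $\ell(u)$ is $a_{B_v}$, so $\ell(u)$ overlaps the label of \emph{every} $V_p$-vertex of $B_v$ (they all begin with $a_{B_v}$), creating spurious edges whenever $B_v$ has more than one $V_p$-vertex — a completely generic situation that conditions 2 and 3 cannot exclude — and at the same time $u$ loses its required length-$1$ overlaps with the $V_p$-vertices of its own component $B_u$ (its suffix is no longer $a_{B_u}$). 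The "orientation" difficulty you flag at the end is not resolvable within this scheme. The fix is to realize matching edges by \emph{equal} labels: give $u$ and $v$ the same two-character label (first character the component id of $v$, second the component id of $u$), so suffixes of $V_s$-labels and prefixes of $V_p$-labels always remain the own-component id, pad unmatched vertices with fresh unique characters, and then the only possible spurious overlaps are equal-label pairs not in $M$; ruling those out is precisely where one builds an induced $C_6$ or domino (using that the relevant components have at least two vertices on one side, hence a vertex on the other) and invokes conditions 2 and 3.
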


Note that Theorem~\ref{thm:readability-2-characterization} expresses a condition on vertex labels of a bipartite graph in purely graph theoretic terms. This reduces the problem of deciding if a graph has readability at most~$2$ to checking the existence of a matching with a specific property.

\paragraph{An efficient algorithm for readability $2$ (Section~\ref{sec:readability-2}).}
It is unknown whether computing the readability of a given bipartite graph is NP-hard.
In fact, it is not even known whether the decision version of the problem is in NP,
as the only upper bound on the readability of a bipartite graph with $n$ vertices in each part is $\Oh(2^n)$~\cite{BM02}.
We make progress on this front by showing that for readability 2, the decision version is polynomial time solvable.
\begin{theorem}\label{thm:readability-2}
There exists an algorithm that, given a bipartite graph $G$, decides in polynomial time whether $G$ has readability at most~$2$.
\end{theorem}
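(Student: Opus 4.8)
The plan is to leverage Theorem~\ref{thm:readability-2-characterization}, which reduces deciding $r(G)\le 2$ to the purely combinatorial question of whether $G$ has a matching $M$ such that $G-M$ is a disjoint union of complete bipartite graphs and $M$ interacts correctly with every induced $C_6$ and every induced domino. First I would reduce to the case that $G$ is connected and twin-free: the characterization is stated for twin-free graphs, twins can be removed in polynomial time without changing readability (as noted in the preliminaries), and readability of a disconnected graph is the maximum over its components. So from now on $G$ is connected and twin-free, with vertex parts $V_s, V_p$.

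The core idea is to recognize that ``$G-M$ is a disjoint union of complete bipartite graphs'' is an extremely rigid condition, so the candidate matchings $M$ are essentially forced. In a connected twin-free bipartite graph that is not itself a biclique, removing a matching $M$ to obtain a disjoint union of bicliques means that $M$ must be exactly the set of edges that ``cross'' between the resulting biclique blocks. I would first handle the trivial case where $G$ is already a disjoint union of bicliques (take $M=\emptyset$; conditions 2 and 3 are vacuous since bicliques contain no induced $C_6$ or domino), so assume $G$ is not. Then the key structural step is: the partition of $V(G)$ into the vertex sets of the bicliques of $G-M$ is uniquely determined, and can be computed, by a local rule. Concretely, two vertices $u,v$ in the same part lie in the same block of $G-M$ only if their neighborhoods in $G$ differ by at most a controlled amount (each edge of $M$ is incident to at most one vertex of each part, so within a block all $V_s$-vertices have the same $V_p$-neighbors up to their single $M$-partner, and symmetrically). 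I would prove a lemma pinning down the block partition from $G$ alone — for instance, by examining the equivalence-like relation ``$u\sim v$ iff the symmetric difference $N(u)\triangle N(v)$ is small and structured as an $M$-perturbation'' — and show there are only polynomially many (in fact $O(1)$ or $O(n)$) candidate partitions, hence candidate matchings $M$, to try. For each candidate $M$, checking conditions 1--3 of Theorem~\ref{thm:readability-2-characterization} is done directly: condition~1 by testing whether $G-M$ is a disjoint union of bicliques (polynomial), and conditions~2 and~3 by a localized check — it suffices to verify them on the induced $C_6$'s and dominoes, and because these are induced subgraphs on $6$ vertices whose structure is constrained, one can enumerate the relevant ones in polynomial time (e.g.\ each induced $C_6$ or domino spans at most three blocks of $G-M$, so it can be found by examining bounded-size collections of blocks), or phrase conditions~2 and~3 equivalently as a property of how $M$'s edges are distributed among blocks.

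I would then assemble these pieces: enumerate the polynomially many candidate block partitions of $V(G)$; for each, form $M$ as the set of edges of $G$ not internal to a block; test condition~1 (disjoint union of bicliques), then conditions~2 and~3; accept iff some candidate passes all three. Correctness follows from Theorem~\ref{thm:readability-2-characterization} together with the uniqueness/rigidity lemma (any valid $M$ arises from one of the enumerated partitions), and the running time is polynomial since each of the polynomially many candidates is checked in polynomial time.

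The main obstacle I anticipate is the rigidity lemma — precisely characterizing and bounding the set of candidate matchings $M$. A priori $M$ could be large and its removal could merge or split biclique structure in subtle ways, and the ``$M$-perturbed twin'' relation need not be transitive, so one must argue carefully that the block partition is nonetheless determined by $G$ (possibly with a small amount of branching over local choices, e.g.\ which of a few vertices is the $M$-partner of a given vertex). A secondary subtlety is making conditions~2 and~3 efficiently checkable without enumerating all $\binom{|V(G)|}{6}$ vertex sextuples blindly; the fix is to observe that any induced $C_6$ or domino in $G$ uses at most three blocks of $G-M$ and at least one (respectively one) edge of $M$ per such subgraph, which localizes the search. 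If the branching in the rigidity lemma turns out to be more than polynomial, a fallback is to formulate the search for $M$ as a constraint on a bounded-treewidth or matching-type auxiliary structure; but I expect the direct rigidity argument to suffice, giving a clean polynomial-time algorithm.
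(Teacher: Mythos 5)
Your reduction to connected twin-free graphs and your use of Theorem~\ref{thm:readability-2-characterization} match the paper's starting point, but the load-bearing step of your plan --- the ``rigidity lemma'' asserting that the block partition of $G-M$ into bicliques is determined by $G$ up to polynomially many candidates, so that only polynomially many matchings $M$ need to be tried --- is false, and this is a genuine gap rather than a technicality. Already for a path $P_n$ (which is connected and twin-free), a matching $M$ is feasible as soon as every three consecutive edges contain an edge of $M$ (conditions 2 and 3 are vacuous), and there are exponentially many such matchings, with exponentially many pairwise distinct block partitions; the same phenomenon occurs in graphs of maximum degree at least $3$ whenever a long induced path of degree-$2$ vertices is attached to the dense part, so you cannot dismiss it by excluding paths and cycles. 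Vertices in the same block need not have nearly equal neighborhoods in any way that pins the partition down globally, and your acknowledged fallback (``a small amount of branching over local choices'') is exactly where the exponential blow-up hides. A minor remark in the other direction: your worry about checking conditions 2 and 3 is unfounded, since enumerating all $6$-vertex induced subgraphs is $O(n^6)$ and hence polynomial anyway.

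The paper resolves precisely this difficulty differently: it does not enumerate candidate matchings at all. After reducing (via Lemma~\ref{lem:induced-subgraph} and Corollary~\ref{cor:path-cycle}) to connected twin-free graphs of maximum degree at least $3$, it introduces a variable $x_e$ only for edges $e$ in a restricted set $E'$ (edges whose closed neighborhood contains a degree-$\ge 3$ vertex, or which lie in an induced $C_6$) and writes a 2SAT formula whose clauses encode the matching condition, the forced equalities on induced $C_4$'s, and the constraints from induced $C_6$'s, dominoes, and forks; a satisfying assignment yields a partial matching $M'$ on $E'$ that is then completed \emph{greedily} on the low-degree regions by adding middle edges of surviving induced $P_4$'s (Lemma~\ref{lem:2SAT}). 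The 2SAT encoding handles the exponentially large space of feasible matchings implicitly, and the delicate part of the correctness proof is exactly showing that the greedy completion stays a matching and remains feasible --- the work your rigidity lemma was supposed to do but cannot. To repair your proposal you would need to replace the enumeration of block partitions by some such implicit search (2SAT, or another tractable constraint formulation) rather than strengthen the rigidity claim, which has no true form strong enough for your algorithm.
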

Moreover, if the answer is ``yes", the algorithm can also produce an overlap labeling of length at most~$2$.

\paragraph{Readability of grids and grid graphs (Section~\ref{sec:grids}).}
We give a full characterization of the readability of grids. A {\it (two-dimensional) grid} is a graph $G_{m,n}$ with vertex set $\{0,1,\dots,m-1\}\times \{0,1,\dots,n-1\}$ such that there is an edge between two vertices
if and only if~the $L_1$-distance between them is~$1$.
An example is shown in Figure~\ref{fig:grid_graph_4x4}.
The following theorem fully settles the question of readability of grids.
\begin{figure}[t]
	\begin{center}
		\includegraphics[width=0.65\linewidth]{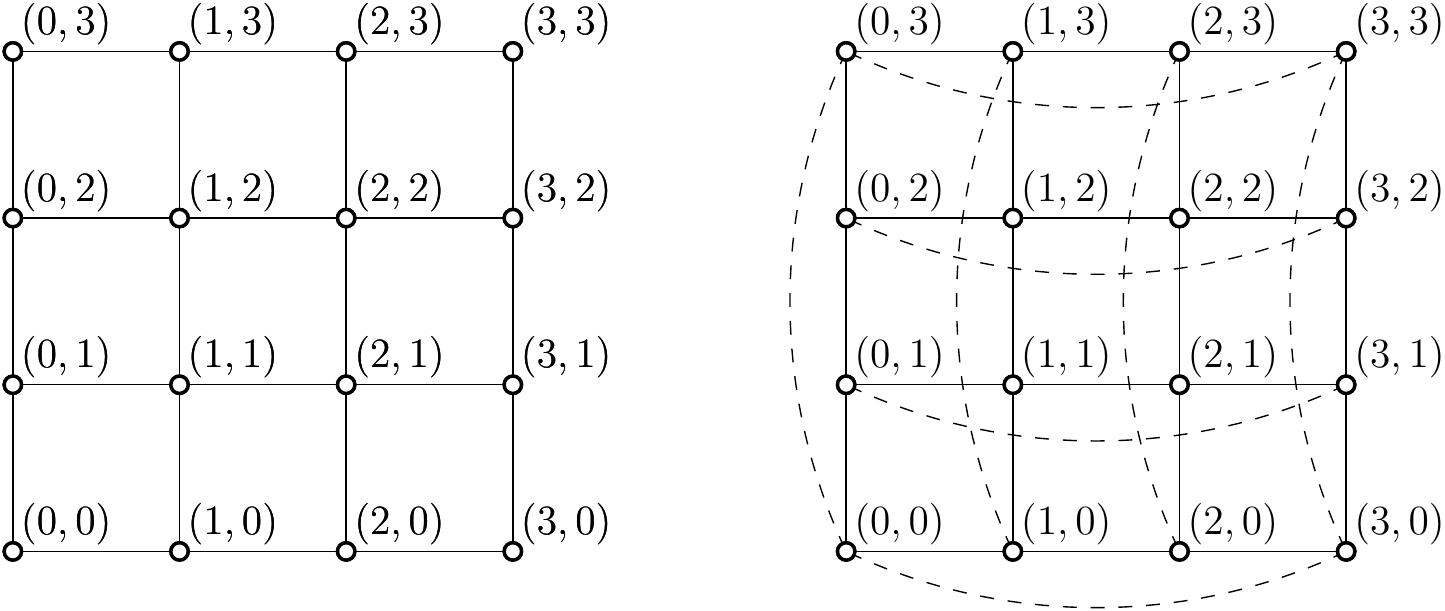}
	\end{center}
	\caption{The $4\times 4$ grid $G_{4,4}$ and toroidal grid ${\it TG}_{4,4}$.}\label{fig:grid_graph_4x4}
\end{figure}
%
\begin{theorem}\label{thm:rgrid}
	For any two positive integers $m, n$ with $m\le n$, we have
	$$r(G_{m,n})= \left\{
	\begin{array}{ll}
	3, & \hbox{if $m\ge 3$ and $n\ge 3$;} \\
	2, & \hbox{if $(m = 2$ and $n\ge 3)$ or $(m = 1$ and $n\ge 4)$;} \\
	1, & \hbox{if $(m,n)\in \{(1,2),(1,3),(2,2)\}$;}\\
	0, & \hbox{if $m = n = 1$.}
	\end{array}
	\right.$$
\end{theorem}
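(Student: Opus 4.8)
The plan is to handle the small cases $m \le 2$ (or $n \le 3$) directly by exhibiting explicit labelings and citing the known lower bounds, and then to concentrate on the main case $m, n \ge 3$, where we must show both $r(G_{m,n}) \le 3$ and $r(G_{m,n}) \ge 3$. For the small cases: when $m = n = 1$ the graph is a single edge, which is a biclique, so readability is $0$; the three graphs $G_{1,2}, G_{1,3}, G_{2,2}$ are easily checked to be bicliques after removing no edges and are not themselves disjoint unions of bicliques in the twin-free sense, so one verifies readability exactly $1$ by a tiny labeling and the characterization of readability-$1$ graphs from~\cite{ChikhiMMR16}. For $m = 1$, $n \ge 4$ (a path) and $m = 2$, $n \ge 3$ (a ladder), I would give an explicit overlap labeling of length $2$ and then argue $r \ge 2$ by showing these graphs are not disjoint unions of bicliques, so by~\cite{ChikhiMMR16} they cannot have readability $\le 1$.

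For the upper bound $r(G_{m,n}) \le 3$ with $m, n \ge 3$, the core task is to construct an overlap labeling of length $3$. First I would fix a bipartition of the grid into its two color classes according to the parity of $i + j$ (vertices of even sum form $V_s$, odd sum form $V_p$, or we pick the orientation so that each edge is directed from a suffix-vertex to a prefix-vertex consistently). The natural idea is to make the label of a vertex encode its grid coordinates in a way that two labels overlap exactly when the coordinates differ by $1$ in a single axis. I would try labels of the form (suffix-side) "$x$-coordinate marker followed by $y$-coordinate marker" and (prefix-side) "$y$-coordinate marker followed by $x$-coordinate marker", chosen so that a length-$1$ overlap corresponds to a horizontal edge and a length-$2$ overlap to a vertical edge (or vice versa), while length-$3$ overlaps are never needed for grids but may be needed for the lower-bound matching argument; since only length $\le 3$ strings appear, this certifies $r \le 3$. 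The main obstacle in this part is ruling out \emph{spurious} overlaps: the encoding must guarantee that non-adjacent vertices never produce an overlap, which forces a careful choice of which coordinate goes first and how the three alphabet symbols separate the ``row index'' from the ``column index''. I expect this to reduce to a short finite case analysis on the possible overlap lengths $1, 2, 3$.

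For the lower bound $r(G_{m,n}) \ge 3$ when $m, n \ge 3$, it suffices to exhibit a single induced subgraph of $G_{m,n}$ that is not among the graphs permitted by the readability-$2$ characterization (Theorem~\ref{thm:readability-2-characterization}); equivalently, to show $G_{3,3}$ (or some small induced subgraph present in every grid with $m,n\ge 3$) has readability $\ge 3$. The cleanest route is: assume for contradiction a matching $M$ exists with $G' = G_{3,3} - M$ satisfying properties (1)--(3) of Theorem~\ref{thm:readability-2-characterization}; locate the induced $C_6$'s and the induced domino inside $G_{3,3}$; and show that the constraints forced by property (2) on the several $C_6$'s, together with property (1) requiring $G'$ to be a disjoint union of bicliques, are mutually inconsistent — each edge can be removed by $M$ at most once (matching condition), but the $C_6$ constraints collectively demand too many removals incident to the central vertex of the $3\times 3$ grid. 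I would make this precise by a counting or parity argument on the edges at the degree-$4$ center vertex. The main difficulty here is organizing the (finitely many) induced $C_6$ and domino subgraphs of $G_{3,3}$ so that the contradiction is transparent rather than a brute-force check; I anticipate that focusing on the center vertex and its four incident edges, and on the four ``corner'' $C_6$'s that each use the center, yields the cleanest impossibility argument. Finally, since $G_{3,3}$ is an induced subgraph of every $G_{m,n}$ with $m, n \ge 3$ and readability is monotone under induced subgraphs (stated in Section~\ref{sec:def}), the lower bound $r(G_{m,n}) \ge 3$ follows, completing the proof together with the upper bound.
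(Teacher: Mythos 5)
The main gap is in your upper bound for $m,n\ge 3$: the coordinate-marker labeling you sketch cannot exist. For any overlap labeling $\ell$ and any fixed $i$, the relation ``$\suf_i(\ell(u))=\pre_i(\ell(v))$'' is determined by one string of length $i$ on each side, so it is a disjoint union of bicliques, and every pair in it must be an edge. Hence, if all horizontal edges were realized by length-$1$ overlaps, then along four consecutive vertices $a_0,a_1,a_2,a_3$ of a row (present whenever $n\ge 4$) the forced equalities give $\suf_1(\ell(a_0))=\pre_1(\ell(a_3))$, a spurious overlap between non-adjacent vertices; the same argument defeats ``vertical edges via length-$2$ overlaps'' once $m\ge 4$. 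Your side claim that length-$3$ overlaps are never needed is also false: if a length-$3$ labeling realized every edge by an overlap of length at most $2$, truncating each $V_s$-label to its length-$2$ suffix and each $V_p$-label to its length-$2$ prefix would yield a length-$2$ overlap labeling, contradicting $r(G_{m,n})=3$. The paper's construction is structurally different: it embeds $G_{m,n}$ as an induced subgraph of the toroidal grid ${\it TG}_{4n,4n}$ and partitions the torus edges into $2\times 2$ squares (realized by length-$1$ overlaps via square identifiers --- this layer \emph{is} a union of bicliques), a horizontal perfect matching (length-$2$ overlaps) and a vertical perfect matching (length-$3$ overlaps), with a uniqueness property on which pairs of squares are joined by matching edges that rules out spurious overlaps. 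Some such ``square-like'' first layer is unavoidable, since the length-$1$ layer must be a union of bicliques contained in the grid.

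Your lower-bound plan also rests on a false premise: grids contain no induced $C_6$ at all (every $6$-cycle in a grid bounds two unit squares and hence has a chord), so there are no ``corner $C_6$'s through the center'' of $G_{3,3}$ to play off against one another; only the domino condition and $P_4$-freeness of $G-M$ are available. The overall strategy, however, is the paper's: it applies Theorem~\ref{thm:readability-2-characterization} to a seven-vertex induced subgraph $F$ (the $2\times 3$ grid with a pendant vertex attached to a degree-$3$ vertex), where the domino condition forces the matching restricted to the domino to be one of two symmetric pairs and $P_4$-freeness of $F-M$ then fails, giving $r(F)\ge 3$ and hence $r(G_{m,n})\ge 3$ by induced-subgraph monotonicity; you would need to supply an argument of this kind in place of the $C_6$ counting. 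Your small cases essentially match the paper (a feasible matching for the ladder, maximum degree $2$ for the path, $K_{1,1},K_{1,2},K_{2,2}$ for readability $1$), except that $G_{1,1}$ is a single vertex, not a single edge: a single edge has readability $1$, and readability $0$ means edgeless.
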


Theorem~\ref{thm:rgrid} has an algorithmic implication for the readability of grid graphs, where a {\it grid graph} is an induced subgraph of a grid. Several problems are known to be NP-hard on the class of grid graphs, including Hamiltonicity problems~\cite{IPS82}, various layout problems~\cite{DPPS01}, and others (see, e.g.,~\cite{CCJ90}). We show that unless P = NP, this is not the case for the readability problem.

\begin{corollary}\label{cor:grid-graphs}
	The readability of a given grid graph can be computed in polynomial time.
\end{corollary}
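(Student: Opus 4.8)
The plan is to reduce the computation of the readability of an arbitrary grid graph $H$ to a small number of cases, using Theorem~\ref{thm:rgrid} together with the earlier characterizations of readability~$1$ and readability~$2$. Recall that $H$ is by definition an induced subgraph $G_{m,n}[U]$ of some grid; we may assume $H$ is given together with such an embedding (or compute one, since recognizing grid graphs and finding an embedding is not the issue here — the input is naturally presented as a set of lattice points). Since readability is a property of each connected component (the readability of a disconnected bipartite graph is the maximum over its components, and our earlier reductions to twin-free connected graphs apply), and since deleting twins does not change readability~\cite{ChikhiMMR16}, we first decompose $H$ into connected components and, within each, identify twins; it then suffices to compute the readability of each connected twin-free grid graph.

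First I would handle readability $0$ and $1$: a connected twin-free graph has readability~$0$ iff it is a single vertex or a single edge, and readability~$1$ iff it is a (nontrivial) complete bipartite graph, by the result of Chikhi et al.~\cite{ChikhiMMR16}; both conditions are checkable in polynomial time. Next, for readability~$2$, I would invoke Theorem~\ref{thm:readability-2}, which gives a polynomial-time test for whether a bipartite graph has readability at most~$2$ (and in fact produces a labeling). So in polynomial time we can decide whether $r(H') \le 2$ for each component $H'$. The key remaining point is the upper bound: I claim that every grid graph has readability at most~$3$. This is where Theorem~\ref{thm:rgrid} does the work — the full grid $G_{m,n}$ has readability at most~$3$, and readability is monotone under taking induced subgraphs (an overlap labeling of $G$ restricts to an overlap labeling of $G[U]$, since the overlap relation between the retained labels is unchanged), so $r(H) \le r(G_{m,n}) \le 3$. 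Hence for each component the readability is one of $0,1,2,3$, and having ruled out $0,1,2$ by the tests above, the answer is exactly~$3$ otherwise. Taking the maximum over all components gives $r(H)$.

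The only genuine subtlety is making sure the induced-subgraph monotonicity of readability is stated and used correctly, and that the reductions to connected twin-free graphs are applied componentwise; both are routine consequences of the definition of overlap labeling and of facts already recorded in Section~\ref{sec:def} and in~\cite{ChikhiMMR16}. Everything else is a finite case analysis driven by Theorems~\ref{thm:rgrid} and~\ref{thm:readability-2}. I do not anticipate a real obstacle here — the content is entirely in the two theorems being combined, and the corollary is essentially a bookkeeping argument — but if one wanted to be careful, the step deserving the most attention is verifying that the polynomial-time readability-$2$ test of Theorem~\ref{thm:readability-2} applies to the (possibly non-twin-free, possibly disconnected) graph as given, which is immediate since that theorem is stated for arbitrary bipartite graphs.
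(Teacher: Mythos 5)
Your proposal is correct and follows essentially the same route the paper intends for this corollary: every grid graph is an induced subgraph of a grid, so by Lemma~\ref{lem:induced-subgraph}(a) and Theorem~\ref{thm:rgrid} its readability is at most~$3$, and the values $0$, $1$, and $2$ can be tested in polynomial time (edgeless, $P_4$-free as in Lemma~\ref{lem:small-readability}, and the algorithm of Theorem~\ref{thm:readability-2}, respectively), so the exact value is determined by elimination. One small slip to fix: a single edge has readability $1$, not $0$ (readability $0$ holds iff the graph is edgeless), but this does not affect your argument since the readability-$1$ test already covers that case.
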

\subsection{Technical Overview}\label{sec:techover}

We now give a brief  description of our techniques. The key to proving the upper bound on the readability of bipartite chain graphs is understanding the combinatorics of the following process.
We start with the sequence $(1,2)$.
The process consists of a series of rounds, and as a convention, we start at round 3: we write $3$ ($=1+2$)  between $1$ and $2$ and obtain the sequence $(1,3,2)$.
More generally, in round $r$, we insert $r$ between all the consecutive pairs of numbers in the current sequence that sum up to $r$. Thus, we obtain $(1,4,3,2)$ in round 4, then $(1,5,4,3,5,2)$ in round 5, and so on. The question is to determine the length of the sequence formed in round $r$ as a function of $r$.
We prove that this length is $\frac 12\sum^r_{k=1} \varphi(k) = \Theta(r^2)$, where $\varphi(k)$ is the famous Euler's totient function denoting the number of integers in $\{1,\dots,k\}$ that are coprime to $k$.

To prove our lower bound on the readability of
bipartite chain graphs, we define a special sequence of subgraphs of the bipartite chain graph such that the number of graphs in the sequence is a lower bound on the readability. The sequence that we define has the additional property that if two vertices in the same part have the same set of neighbors in one of the graphs, then they have the same set of neighbors in all of the preceding graphs in the sequence. If the readability is very small, then we cannot simultaneously cover all the edges incident with two large-degree nodes as well as have their degrees distinct.
The only properties of the connected twin-free bipartite chain graph that our proof uses are that it is dense and all vertices in the same part have distinct degrees.
Hence, this technique is more broadly applicable to any graph class satisfying these properties.

Our characterization of graphs of readability at most~$2$, roughly speaking, states that a twin-free bipartite graph has readability at most~$2$ if and only if the graph can be decomposed into two subgraphs $G_1$ and $G_2$ such that $G_1$ is a disjoint union of bicliques and $G_2$ is a matching satisfying some additional properties. For $i\in \{1,2\}$, the edges in $G_i$ model overlaps of length exactly $i$. The heart of the proof lies in observing that for each pair of bicliques in the first subgraph, there can be at most~one matching edge in the second subgraph that has its left endpoint in the first biclique and the right endpoint in the second biclique.

To derive a polynomial time algorithm for recognizing graphs of readability two, we first reduce the problem to connected twin-free graphs of maximum degree at least three. For such graphs, we show that the constraints from our characterization of graphs of readability at most~$2$ can be expressed with a 2SAT formula having variables on edges and modeling the selection of edges forming a matching to form the graph $G_2$ of the decomposition.

In order to determine the readability of grids, we establish upper and lower bounds and in both cases use the fact that readability is monotone under induced subgraphs (that is, the readability of a graph is at least the readability of each of its induced subgraphs).
The upper bound is derived by observing that every grid is an induced subgraph of some $4n\times 4n$ toroidal grid (see Figure~\ref{fig:grid_graph_4x4}) and exploiting the symmetric structure of such toroidal grids to show that their readability is at most~$3$. This is the most interesting part of our proof and involves partitioning the edges of a $4n\times 4n$ toroidal grid into three sets and coming up with labels of length at most~$3$ for each vertex based on the containment of the four edges incident with the vertex in each of these three parts. Our characterization of graphs of readability at most~$2$ is a helpful ingredient in proving the lower bound on the readability of grids, where we construct a small subgraph of the grid for which our characterization easily implies that its readability is at least $3$.

\section{Preliminaries}\label{sec:def}

For a string $x$, let $\pre_i(x)$ (respectively, $\suf_i(x)$) denote the prefix (respectively, suffix) of $x$ of length $i$. A string $x$ {\em  overlaps} another string $y$ if there exists an $i$ with $1 \leq i \le \min\{|x|,|y|\}$ such that $\suf_i(x) = \pre_i(y)$. If $1 \le i < \min\{|x|,|y|\}$, we say that $x$ {\em properly overlaps} with $y$.
For a positive integer $k$, we denote by $[k]$ the set $\{1,\ldots, k\}$.
Let $G= (V,E)$ be a (finite, simple, undirected) graph.
If $G$ is a connected bipartite graph, then it has a unique bipartition (up to the order of the parts).
In this paper, we consider bipartite graphs $G = (V,E)$. If the bipartition $V= V_s\cup V_p$ is specified, we denote such graphs by $G = (V_s, V_p, E)$.
Edges of a bipartite graph $G$ are denoted by $\{u,v\}$ or by $(u,v)$ (which implicitly implies that $u\in V_s$ and $v\in V_p$).
We respect bipartitions when we perform graph operations such as taking an induced subgraph and disjoint union.
For example, we say that a bipartite graph $G_1 = (V_s^1,V_p^1,E_1)$ is an {\it induced subgraph} of a bipartite graph $G_2 = (V_s^2,V_p^2,E_2)$ if $V_s^1\subseteq V_s^2$, $V_p^1\subseteq V_p^2$, and $E_1 = E_2\cap \{(x,y): x\in V_s^1, y\in V_p^1\}$.
The {\it disjoint union} of two vertex-disjoint bipartite graphs $G_1 = (V_s^1,V_p^1,E_1)$ and
$G_2 = (V_s^2,V_p^2,E_2)$ is the bipartite graph $(V_s^1\cup V_s^2,V_p^1\cup V_p^2,E_1\cup E_2)$.

The path on $n$ vertices is denoted by $P_n$. Given two graphs $F$ and $G$, graph $G$ is said to be {\em $F$-free} if no induced subgraph of $G$ is isomorphic to $F$.
Two vertices $u,v$ in a bipartite graph
are called {\em twins} if they belong to the same part of the bipartition and
have the same neighbors (that is, if $N(u)=N(v)$).
Given a bipartite graph $G = (V_s,V_p,E)$ we can define its {\it twin-free reduction} ${\it TF}(G)$
as the graph with vertices being the equivalence classes of the twin relation on $V(G)$ (that is, $x\sim y$ if and only if $x$ and $y$ are twins in $G$), and two classes $X$ and $Y$ are adjacent if and only if $(x,y)\in E$ for some $x\in X$ and $y\in Y$. For graph theoretic terms not defined here, we refer to~\cite{MR1367739}.

We now state some basic results for later use. 
\begin{lemma}\label{lem:induced-subgraph}
Let $G$ and $H$ be two bipartite graphs.
\begin{enumerate}
 \item[(a)] If $G$ is an induced subgraph of $H$, then $r(G) \le r(H)$.
 \item[(b)] If $F$ is the disjoint union of $G$ and $H$, then
		$r(F) = \max\{r(G), r(H)\}$.
 \item[(c)] The readability of $G$ is the same for all bipartitions of $V(G)$.
 \item[(d)] $r(G) = r({\it TF}(G))$.
\end{enumerate}
\end{lemma}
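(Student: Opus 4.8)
The plan is to prove the four parts of Lemma~\ref{lem:induced-subgraph} more or less independently, since each is a structural observation about how overlap labelings interact with the relevant graph operation. Throughout, I would work directly from the definition: an overlap labeling of length $r$ assigns to each vertex $x$ a string $\ell(x)$ of length at most $r$ such that for $u\in V_s$ and $v\in V_p$ we have $(u,v)\in E$ if and only if $\suf_i(\ell(u))=\pre_i(\ell(v))$ for some $1\le i\le\min\{|\ell(u)|,|\ell(v)|\}$.

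For part~(a), suppose $H$ has an overlap labeling $\ell$ of length $r(H)$, and let $G$ be an induced subgraph of $H$. I would simply restrict $\ell$ to $V(G)$. Since $G$ is \emph{induced} (this is essential), the adjacency relation in $G$ between a vertex of $V_s^G$ and a vertex of $V_p^G$ agrees with that in $H$, so the restricted labeling is a valid overlap labeling of $G$, of length at most $r(H)$; hence $r(G)\le r(H)$. For part~(d), note that $G$ is an induced subgraph of a graph whose twin-free reduction is ${\it TF}(G)$, and conversely one obtains a labeling of $G$ from a labeling of ${\it TF}(G)$ by giving all vertices in a twin class the same label. One has to check that assigning identical labels to twins does not create spurious edges or destroy existing ones: two vertices with the same label behave identically with respect to overlaps with every other label, which is exactly what being twins demands. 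Combined with part~(a) applied to $G\supseteq{\it TF}(G)$ (embedded as an induced subgraph via one representative per class), this gives $r(G)=r({\it TF}(G))$.

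For part~(c), I would use the symmetry of the overlap relation under reversal: if $x^{R}$ denotes the reverse of a string $x$, then $\suf_i(x)=\pre_i(y)$ if and only if $\pre_i(x^{R})=\suf_i(y^{R})$, i.e.\ $\suf_i(y^{R})=\pre_i(x^{R})$. So given an overlap labeling with $V_s$ playing the ``suffix'' role and $V_p$ the ``prefix'' role, reversing every label yields a labeling of the same length that realizes the same graph with the roles of the two parts swapped. This shows the readability does not depend on which part is designated $V_s$; since a connected bipartite graph has an essentially unique bipartition, this settles~(c), and it also shows the notation $r(G)$ for bipartite $G$ without a specified bipartition is well defined.

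Part~(b) is the one that needs the most care, so I expect it to be the main obstacle. The inequality $r(F)\ge\max\{r(G),r(H)\}$ is immediate from part~(a) since $G$ and $H$ are induced subgraphs of their disjoint union. For the reverse inequality, I would take optimal overlap labelings $\ell_G$ and $\ell_H$ and combine them into a single labeling of $F$ of length $\max\{r(G),r(H)\}$; the point is to modify them so that no label of a $V_s$-vertex in one component overlaps any label of a $V_p$-vertex in the other component, while preserving all within-component overlaps. The natural device is to enlarge the alphabet: prepend a fresh symbol $a$ (not used in $\ell_H$) to the front of every $V_s$-label of $G$ and append it to the end of every $V_p$-label of $G$ (symmetrically with a fresh symbol $b$ for $H$) — but one must be careful that this does not increase the length beyond $\max\{r(G),r(H)\}$ and does not disturb overlaps inside a component. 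A cleaner route, which avoids length blow-up, is instead to relabel so that $G$ and $H$ use disjoint alphabets: replace every symbol in all of $G$'s labels by a ``tagged'' copy and likewise for $H$. Since overlaps depend only on equality of symbols, within-component overlaps are untouched, and a suffix of a $G$-label can never equal a prefix of an $H$-label because they share no symbols; lengths are unchanged. This yields an overlap labeling of $F$ of length $\max\{r(G),r(H)\}$, completing the proof. The subtle point to get right is exactly this separation-without-length-increase argument, and the observation that the problem statement places no restriction on the alphabet is what makes it go through.
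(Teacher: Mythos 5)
Your proposal is correct and follows essentially the same route as the paper: restriction of labelings for (a), the disjoint-alphabet (tagged symbols) combination for (b), label reversal for (c), and identical labels on twin classes together with (a) for (d). The only small omission is in (c): for disconnected $G$ the bipartition is not unique up to a single global swap (components can be flipped independently), so one should first reduce to connected components via part (b), as the paper does explicitly — a one-line fix that your own part (b) supplies.
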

\begin{proof}
	$(a)$	If $\ell$ is any overlap labeling for $H$ then the restriction of $\ell$ to $V(G)$ yields an overlap labeling for $G$. Thus,
	$r(G) \le r(H)$.
	
\begin{sloppypar}
	$(b)$	Part $(a)$ implies that
	$r(G)\le r(F)$  and
	$r(H)\le r(F)$; thus $r(F)\geq\max\{r(G),r(H)\}$.
	On the other hand, let $\ell_G$ and $\ell_H$ be optimal labelings of $G$ and
	$H$, over $\Sigma_G$ and $\Sigma_H$, respectively.
	By introducing new characters if necessary, we may assume that $\Sigma_G\cap
	\Sigma_H = \emptyset$.
	Thus, the combined labeling $\ell$ of $F$ over $\Sigma = \Sigma_G\cup
	\Sigma_H$, defined as
	$$\ell(x) = \left\{
	\begin{array}{ll}
	\ell_G(x), & \hbox{if $x\in V(G)$;} \\
	\ell_H(x), & \hbox{if $x\in V(H)$.}
	\end{array}
	\right.$$
	for all $x\in V(F)$, is an overlap labeling of $F$, showing that
	$r(F) \le \max\{r(G), r(H)\}$.
\end{sloppypar}
	
	$(c)$	By part $(b)$, the readability of $G$ is the maximum readability of a connected component of $G$. Therefore, it is sufficient to prove the lemma for the case when $G$ is connected.
	Every connected graph has a unique bipartition, up to switching the roles of $V_s$ and $V_t$. Switching the roles of $V_s$ and $V_t$ in a graph does not affect its readability, because an overlap labeling of the new graph can be obtained by reversing all the labels in the overlap labeling of the original graph. Thus, the readability of $G$ is not affected by the choice of bipartition of $V(G)$.
	
	$(d)$  It suffices to prove that for a pair of twins $u$ and $v$, $r(G) =r(G-u)$.
	By part $(a)$, we have $r(G-u) \le r(G)$.
	Conversely, an optimal overlap labeling $\ell$ of $G-u$ can be extended to an overlap labeling $\ell'$ of $G$
	of the same maximum length as $\ell$ by setting, for all $x\in V(G)$,
	$$\ell'(x) = \left\{
	\begin{array}{ll}
	\ell(x), & \hbox{if $x\in V(G)\setminus\{v\}$;} \\
	\ell(u), & \hbox{if $x = v$.}
	\end{array}
	\right.$$
	Thus,  $r(G) \le r(G-u)$.
\end{proof}
Lemma~\ref{lem:induced-subgraph}{(b)} shows that the study of readability reduces to the case of connected bipartite graphs.
By Lemma~\ref{lem:induced-subgraph}{(c)}, the readability of a bipartite graph is well defined even if a bipartition is not given in advance.
We state our results without specifying a bipartition in Sections~\ref{sec:readability-2-characterization}-\ref{sec:readability-2}.
Lemma~\ref{lem:induced-subgraph}{(d)} further shows that to understand the readability of connected bipartite graphs, it suffices to study the readability of connected twin-free bipartite graphs.




\section{Readability of bipartite chain graphs}\label{sec:cnn-bounds}

In this section, we prove an upper (Section~\ref{subsec:cnn-upperbound}) and
a lower (Section~\ref{subsec:cnn-lowerbound})
bound  on the readability of twin-free bipartite chain graphs, $C_{n,n}$.
Recall that the graph $C_{n,n}$ is $(V_s,V_p,E)$ where $V_s =\{s_1,\ldots,s_n\}$,
$V_p=\{p_1,\ldots,p_n\}$, and $E=\{(s_i,p_j) \mid 1\le i\leq j\le n\}$.


\subsection{Upper bound}\label{subsec:cnn-upperbound}

To prove Theorem~\ref{theorem:ub-bip-chain-graphs}, we construct a labeling $\ell$ of length $\Oh(\sqrt{n})$ for $C_{n,n}$ that satisfies (1) $\ell(s_i) = \ell(p_i)$ for all $i\in [n]$, and (2) $\ell(s_i)$ properly overlaps $\ell(s_j)$ if and only if  $i < j$.
It is easy to see that such an $\ell$ will be a valid overlap labeling of $C_{n,n}$. As the labels on either side of the bipartition are equal, we will just come up with a sequence of $n$ strings to be assigned to one of the sides of $C_{n,n}$ such that the strings satisfy condition (2) above.
\begin{definition}
	A sequence of strings $(s_1,\dots, s_t)$ is {\em forward-matching} if
	\begin{itemize}
		\item $\forall i\in [t],$ string $s_i$ does not have a proper overlap with itself and
		\item $\forall i,j\in [t],$ string $s_i$ overlaps string $s_j$ if and only if  $i\leq j$.
	\end{itemize}
\end{definition}
Given an integer $r\geq 2$, we will show how to construct a forward-matching sequence $S_r$ with $\Theta(r^2)$ strings, each of length at most~$r$, over an alphabet of size $3$.
This will imply an overlap labeling of length $\Oh(\sqrt{n})$ for $C_{n,n}$, proving Theorem~\ref{theorem:ub-bip-chain-graphs}. The following lemma is crucial for this construction.

\begin{lemma}\label{lem:forward-matching-expansion}
	For all integers $t\geq 2$ and all $i\in[t-1]$, if $(s_1,\dots,s_t)$ is forward-matching, so is $(s_1,\dots,s_i,s_is_{i+1},s_{i+1},\dots,s_t)$.
\end{lemma}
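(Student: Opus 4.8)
The goal is to show that inserting the concatenation $s_is_{i+1}$ between $s_i$ and $s_{i+1}$ preserves the forward-matching property. Write $t' = t+1$ and name the new sequence $(s_1', \dots, s_{t'}')$, so that $s_j' = s_j$ for $j \le i$, $s_{i+1}' = s_is_{i+1}$, and $s_{j}' = s_{j-1}$ for $j \ge i+2$. The plan is to verify the two defining conditions directly, splitting into cases according to where the indices $j,k$ of the pair under consideration lie relative to the newly inserted position $i+1$.

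\textbf{Key steps.} First I would handle the ``no proper self-overlap'' condition: for every $s_j'$ with $j \neq i+1$ this is inherited from the original sequence, so the only new obligation is to show $s_is_{i+1}$ has no proper overlap with itself. Here I would use that in the original sequence $s_i$ overlaps $s_{i+1}$ but $s_{i+1}$ does not overlap $s_i$ (since $i < i+1$ but $i+1 \not\le i$), and that neither $s_i$ nor $s_{i+1}$ properly self-overlaps; a proper overlap of $s_is_{i+1}$ with itself would have to either be short enough to live inside the two copies of $s_{i+1}$ (contradicting no self-overlap of $s_{i+1}$, or forcing $s_{i+1}$ to overlap $s_i$) or long enough to force $s_i$ (or $s_{i+1}$) to self-overlap — a short case analysis on the overlap length. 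Second, for the ``$s_j'$ overlaps $s_k'$ iff $j \le k$'' condition, the cases where both $j,k \neq i+1$ reduce immediately to the original sequence after reindexing. The cases where one of $j,k$ equals $i+1$ are the substance: I would show (a) $s_\ell'$ overlaps $s_is_{i+1}$ for every $\ell \le i$, because $s_\ell$ overlaps $s_i$ which is a prefix of $s_is_{i+1}$ — here the subtlety is that the overlap length between $s_\ell$ and $s_i$ is at most $|s_i|$, so it is still a valid overlap with the longer string $s_is_{i+1}$; (b) $s_is_{i+1}$ overlaps $s_k$ for every original $k \ge i+1$, because $s_{i+1}$, a suffix of $s_is_{i+1}$, overlaps $s_k$; (c) $s_is_{i+1}$ overlaps itself (the $j = k = i+1$ diagonal case), which holds since $s_i$ is a prefix and overlaps $s_{i+1}$, hence $\suf(s_is_{i+1})$ of appropriate length equals $\pre(s_is_{i+1})$; and (d) the negative direction: $s_is_{i+1}$ does \emph{not} overlap $s_\ell'$ for $\ell < i+1$ (i.e.\ $s_\ell$ with $\ell \le i$), and $s_k'$ does not overlap $s_is_{i+1}$ for $k > i+1$ (i.e.\ original $s_k$ with $k \ge i+1$, except we must be careful about $k = i+1$ itself which is now index $i+2$ — but $s_{i+1}$ does not overlap $s_is_{i+1}$ unless it overlaps $s_i$, which it doesn't).

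\textbf{Main obstacle.} The delicate part is the negative direction, particularly showing that $s_is_{i+1}$ does not overlap $s_\ell$ for $\ell \le i$ and that $s_k$ does not overlap $s_is_{i+1}$ for the original indices $k > i$. A hypothetical overlap between $s_is_{i+1}$ and $s_\ell$ has some length $q$; if $q \le |s_{i+1}|$ it would give $\suf_q(s_{i+1}) = \pre_q(s_\ell)$, i.e.\ $s_{i+1}$ overlaps $s_\ell$, contradicting $i+1 \not\le \ell$; if $q > |s_{i+1}|$ the suffix of length $q$ of $s_is_{i+1}$ straddles the concatenation point and one extracts from it that $s_i$ overlaps $s_\ell$, contradicting $i \not\le \ell$ when $\ell < i$, while the case $\ell = i$ must be excluded separately using that $s_i$ does not properly overlap itself together with the fact that a long enough overlap would force exactly such a self-overlap. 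I expect this length bookkeeping — carefully tracking which factor of the concatenation a prefix/suffix of given length lands in, and translating each sub-case into an overlap statement about the original sequence — to be the only place requiring genuine care; everything else is routine inheritance.
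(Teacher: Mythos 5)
Your plan follows essentially the same route as the paper: check the positive overlaps directly (noting, correctly, that an overlap with a factor of length at most that factor's length is still an overlap with the concatenation), and for the negative direction do a case split on whether the overlapping suffix/prefix of $s_is_{i+1}$ stays inside one factor or straddles the concatenation point, translating each case into an overlap or proper-self-overlap statement about the original strings. Two small repairs are needed in the write-up. First, in the case ``$s_{i+1}$ versus $s_is_{i+1}$'' you assert that $s_{i+1}$ does not overlap $s_is_{i+1}$ ``unless it overlaps $s_i$''; that is only the short-overlap sub-case. An overlap of length $q>|s_i|$ gives $\suf_q(s_{i+1})=s_i\,\pre_{q-|s_i|}(s_{i+1})$, whose last $q-|s_i|$ characters yield a \emph{proper self-overlap of $s_{i+1}$} rather than an overlap with $s_i$ --- exactly the mirror of the $\ell=i$ sub-case you did treat carefully, so the fix is the argument you already have, just applied on this side too. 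Second, your justification of the diagonal case (that $s_is_{i+1}$ overlaps itself ``since $s_i$ is a prefix and overlaps $s_{i+1}$'') does not actually produce a matching suffix/prefix pair; but no argument is needed there, since any string overlaps itself via the full-length overlap allowed by the definition. With these two points fixed, your proof matches the paper's.
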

\begin{proof}
	For the purposes of notation, let $A$ be an arbitrary string from $s_1, \dots, s_{i-1}$ (if it exists),
	let $B=s_i$, $C=s_{i+1}$, and let $D$ be an arbitrary string from $s_{i+2}, \dots, s_t$ (if it exists).
	The reader can easily verify that $A$ and $B$ overlap with the new string $BC$, and $BC$ overlaps with $C$ and $D$, as desired. 
	What remains to show is that there are no undesired overlaps.
	Suppose for the sake of contradiction that $BC$ overlaps $B$, and let $i$ be the length of any such overlap. If $\suf_i(BC)$ only includes characters from $C$, then $C$ overlaps $B$; if it includes characters from $B$ (and the entire $C$) then $B$ has a proper overlap with itself (see Figure~\ref{fig:forward_matching}). In either case, we reach a contradiction.
	So, $BC$ does not overlap $B$.
	By a symmetric argument, $C$ does not overlap $BC$.
	
	\begin{figure}[t]
		\centering
		\begin{subfigure}{.5\textwidth}
			\centering
			\includegraphics[scale = 0.3]{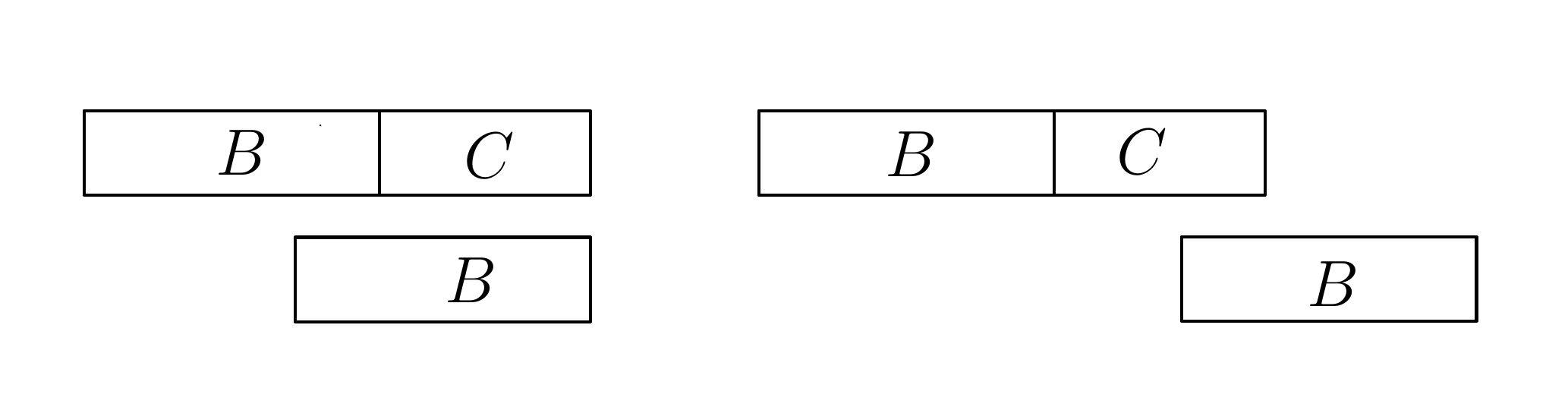}
			\caption{$BC$ does not overlap $B$.}
			\label{fig:forward_matching}
		\end{subfigure}
		\begin{subfigure}{.4\textwidth}
			\centering
			\includegraphics[scale = 0.3]{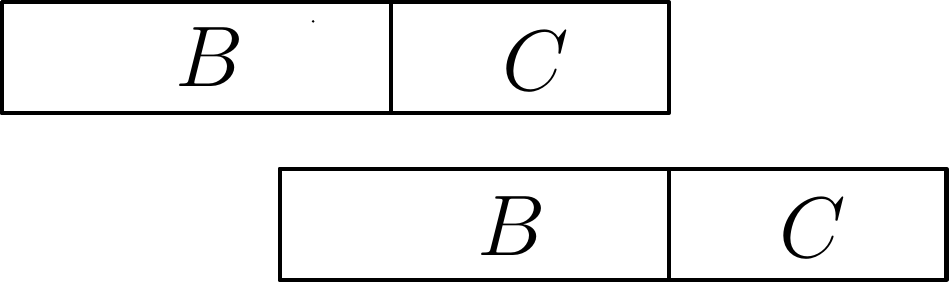}
			\caption{$BC$ has no proper overlap with itself.}
			\label{fig:forward_matching_2}
		\end{subfigure}
		\caption{Overlaps in the proof of Lemma~\ref{lem:forward-matching-expansion}}
	\end{figure}
	
	Next, suppose for the sake of contradiction that $BC$ overlaps $A$, and let $i$ be the length of any such overlap. If $\suf_i(BC)$ only includes characters from $C$, then $C$ overlaps $A$; if it includes characters from $B$ (and the entire $C$) then $B$ overlaps $A$.  In either case, we reach a contradiction. So, $BC$ does not overlap $A$. By a symmetric argument, $D$ does not overlap $BC$.
	
	
	Finally, suppose for the sake of contradiction that $BC$ has a proper overlap with itself, and let $i$ be the length of any such overlap. Since $C$ does not overlap $BC$, it follows that $\suf_i(BC)$ must include characters from $B$ and the entire $C$. But then $B$ has a proper overlap with $B$, a contradiction (see Figure~\ref{fig:forward_matching_2}). So, $BC$ does not have a proper overlap with itself, completing the proof.
\end{proof}

Now, we show how to construct a forward-matching sequence $S_r$. For the base case, we let $S_2 = (20,0,01)$. It can be easily verified that $S_2$ is forward-matching.
Inductively, let $S_r$ for $r > 2$ denote the sequence obtained from $S_{r-1}$ by applying the operation in Lemma~\ref{lem:forward-matching-expansion} to all indices $i$ such that $s_is_{i+1}$ is of length $r$, that is, add all obtainable strings of length~$r$.
Let $B_r$, for all integers $r\geq 2$, be the sequence of lengths of strings in $S_r$.
We can obtain $B_r$ directly from $B_{r-1}$ by performing the following operation: for each consecutive pair of numbers $x,y$ in $B_{r-1}$, if $x+y=r$ then insert $r$ between $x$ and $y$.
Note that there is a mirror symmetry to the sequences with respect to the middle element, 1. The right sides of the first 6 sequences $B_r,$ starting from the middle element, are as follows:
$$\begin{array}{l | c c c c c c c c c c}
r=2 & 1& 2& \\
r=3 & 1& 3 & 2 \\
r=4 & 1& 4& 3 & 2 \\
r=5 & 1& 5& 4& 3 & 5& 2 \\
r=6 & 1& 6& 5& 4& 3 & 5& 2 \\
r=7 & 1& 7& 6& 5& 4& 7& 3 & 5& 7& 2
\end{array}$$


It turns out that $|B_r|$, and, by extension, $|S_r|$, is closely related to the
totient summatory function \cite{OEIS}, also called the partial sums of Euler's totient function.
This is the function $\Phi(r)=\sum_{k=1}^r\varphi(k),$ where
$\varphi(k)$ is the number of integers in $[k]$ that are coprime to $k$.
The asymptotic behavior of $\Phi(r)$ is well known:
$\Phi(n)=\frac{3n^2}{\pi^2} +\Oh(n \log n)$~\cite[p.~268]{MR568909}.
The following lemma therefore implies $|S_r| = |B_r| = \Theta(r^2)$, completing the proof of~Theorem~\ref{theorem:ub-bip-chain-graphs}.

\begin{lemma}\label{lem:totient}
	For all integers $r\geq 2$, the length of the sequence $B_r$ is $\Phi(r) +1$.
\end{lemma}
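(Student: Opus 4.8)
The plan is to analyze the insertion process on the sequences $B_r$ combinatorially and identify exactly which pairs $(x,y)$ of consecutive entries can ever appear. I claim that at the end of round $r$, the consecutive pairs appearing in $B_r$ (reading, say, the right half from the central $1$ outward, and using the mirror symmetry for the left half) are in bijection with the Farey-like set of \emph{coprime} pairs $(a,b)$ with $a+b \le r+1$, or more precisely that a consecutive pair with sum $s$ survives in $B_r$ for all $r \ge s-1$, is never destroyed, and corresponds to a reduced fraction. Concretely: whenever $x,y$ are consecutive with $\gcd(x,y)=d$, the value $x+y$ is inserted between them in round $x+y$; after that insertion the two new consecutive pairs $(x,x+y)$ and $(x+y,y)$ have the same gcd $d$ but strictly larger sums, and no entry is ever removed. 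So the process is exactly the classical mediant/Stern–Brocot construction restricted by the bound $x+y\le r$, and each entry of $B_r$ other than the two original ones $1,2$ was created as the mediant $a+b$ of its two neighbors $a,b$ with $\gcd(a,b)=1$.

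The key steps, in order, are: (i) Prove by induction on $r$ the invariant that every consecutive pair $(x,y)$ in $B_r$ satisfies $x+y > r$ \emph{or} ($x+y\le r$ and the pair was just about to be split — but in fact after processing round $r$ every consecutive pair has sum $>r$), and that for every consecutive pair $\gcd(x,y)=1$; the base case $B_2=(\dots,1,2)$ (half) is immediate. The insertion step preserves coprimality because $\gcd(x,x+y)=\gcd(x,y)$. (ii) Show monotonicity of the entries is irrelevant, but that no two consecutive entries are ever equal except trivially, so each insertion in round $r$ adds exactly one new element per pair summing to $r$; hence $|B_r| - |B_{r-1}|$ equals the number of consecutive pairs in $B_{r-1}$ with sum exactly $r$. (iii) Prove that the number of consecutive pairs in the right half with sum exactly $r$ equals $\tfrac12\varphi(r)$ for $r\ge 3$ — equivalently, the number of consecutive pairs in the full sequence $B_{r-1}$ with sum $r$ equals $\varphi(r)$. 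This is the heart of the argument: such pairs $(a, r-a)$ with $\gcd(a,r-a)=\gcd(a,r)=1$ are in bijection with the $\varphi(r)$ residues $a\in[r]$ coprime to $r$, and one must check that every such coprime pair actually occurs as a consecutive pair at stage $r-1$ (existence) and occurs exactly once (uniqueness). Existence/uniqueness follows from the Stern–Brocot correspondence: the sequence $B_{r-1}$, read between $1$ and $2$, lists precisely the reduced fractions $a/b$ with $a+b\le r$ arranged in order of their appearance in the mediant tree, and consecutive fractions $a/b, c/d$ always satisfy $|ad-bc|=1$, which pins down the neighbor of any given entry.

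Granting (iii), we get $|B_r| = |B_2| + \sum_{k=3}^{r}\varphi(k)$. Since $|B_2| = 3 = 1 + \varphi(1) + \varphi(2) = 1 + 1 + 1$, this telescopes to $|B_r| = 1 + \sum_{k=1}^{r}\varphi(k) = \Phi(r)+1$, as claimed. The main obstacle I anticipate is step (iii), specifically proving that \emph{every} pair of coprime integers summing to $r$ genuinely appears as an adjacent pair in $B_{r-1}$ (not merely that adjacent pairs are coprime with the right sum): this requires the full strength of the Stern–Brocot / Farey correspondence, namely that inserting mediants level by level generates all reduced fractions and that adjacency in the generated sequence is characterized by the unimodularity condition $ad-bc=\pm1$. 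Once that correspondence is set up cleanly — which can be done by a direct induction showing $B_r$ restricted between consecutive survivors is a faithful copy of a Stern–Brocot interval — the counting and the telescoping are routine.
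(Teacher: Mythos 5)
Your decomposition is the same as the paper's: (a) adjacent entries of $B_r$ are coprime, (b) $|B_r|-|B_{r-1}|$ equals the number of adjacent pairs of $B_{r-1}$ summing to $r$, and (c) every ordered coprime pair summing to $r$ occurs exactly once as an adjacent pair in $B_{r-1}$; given (a)--(c), the count $\varphi(r)$ and the telescoping $|B_r|=\Phi(r)+1$ are routine, and your base-case arithmetic is correct. Steps (a) and (b) are fine (and your invariant that after round $r$ every adjacent pair has sum greater than $r$ does hold). The problem is that (c) --- which you yourself flag as the heart --- is not actually proven: you reduce it to the assertion that $B_{r-1}$ ``lists precisely the reduced fractions $a/b$ with $a+b\le r$'' with unimodular adjacency, but that Stern--Brocot correspondence is exactly as strong as (c) and is only promised (``can be done by a direct induction''), not carried out. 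Setting it up rigorously also needs some care that the sketch glosses over: the entries of $B_r$ are integers that repeat many times, so the dictionary must be between insertion events (positions) and fractions; one must prove by induction that adjacent positions carry unimodular fraction pairs; and one needs the mediant-minimality fact (any fraction strictly between two unimodular neighbours has numerator-plus-denominator at least that of the mediant) to conclude that the two parents of a fraction with sum $r$ are still \emph{adjacent} in $B_{r-1}$ at round $r$. As written, the proposal is a correct plan whose crucial step is outsourced to an unestablished correspondence, so it has a genuine gap.

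For comparison, the paper proves (c) directly by strong induction on the sum, with no fractions at all: given coprime $i>j$ with $i+j=r$, coprimality of $(i-j,j)$ and the inductive hypothesis give that $(i-j,j)$ is adjacent exactly once in $B_{i-1}$; the insertion rule then places $i$ between them exactly once in round $i$, and nothing is inserted between $i$ and $j$ in rounds $i+1,\dots,r-1$ because their sum is $r$. This is essentially the same descent that drives the Stern--Brocot argument, but it avoids building the fraction dictionary altogether. Fleshing out your step (iii) along either of these lines would complete the proof.
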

\begin{proof}
	For the base case, observe that $|B_2| = 3 = \Phi(2) + 1$. 
	In general, consider the case of $r\geq 3$.
	\begin{definition}
		Two elements of $B_r$ are called {\em neighbors in $B_r$} if they appear in two consecutive positions in $B_r$.
	\end{definition}
	We will show that any two neighbors are coprime (Claim~\ref{claim:neighbors-are-coprime}) and
	any pair $(i,j)$ of coprime positive integers that sum up to $r$ appears exactly once 
	as a pair of ordered neighbors in $B_r$ (Claim~\ref{claim:exactly-once}).
	Together, these claims show that the neighbor pairs in $B_{r-1}$ that sum up to $r$ are exactly the pairs of coprime positive integers that sum up to $r$.
	\begin{fact}\label{fact:rel-prime}
		If $i$ and $j$ are coprime then each of them is coprime with $i+j$ and with $i-j.$
	\end{fact}
	By this fact, there is a bijection between pairs $(i,j)$ of coprime positive integers that sum up to $r$ and integers $i\in [r]$ that are coprime to $r$.
	Hence, the number of neighbor pairs in $B_{r-1}$ that sum up to $r$ is $\varphi(r)$.
	Therefore, $B_r$ contains $\varphi(r)$ occurrences of $r$.
	By induction, it follows that $|B_r|= |B_{r-1}| + \varphi(r) = \Phi(r-1)+1 + \varphi(r)= \Phi(r)+1$, proving the Lemma.
\end{proof}
We now prove the necessary claims.
	\begin{claim}\label{claim:neighbors-are-coprime}
		For all $r\geq 2$, if two numbers are neighbors in $B_r$, they are coprime.
	\end{claim}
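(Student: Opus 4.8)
\textbf{Proof plan for Claim~\ref{claim:neighbors-are-coprime}.}

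The plan is to prove this by induction on $r$, tracking how the neighbor relation evolves as we pass from $B_{r-1}$ to $B_r$. For the base case $r=2$, the sequence (on the right side of the middle) is $(1,2)$, and consecutive entries $1,2$ are coprime; the full sequence $B_2$ is the mirror $(2,1,2)$, all of whose consecutive pairs are coprime. For the inductive step, assume every pair of neighbors in $B_{r-1}$ is coprime. Recall that $B_r$ is obtained from $B_{r-1}$ by inserting $r$ between every pair of consecutive entries $x,y$ with $x+y=r$. So a pair of neighbors in $B_r$ is of one of two types: (i) a pair $\{x,y\}$ that was already a pair of neighbors in $B_{r-1}$ and was \emph{not} split (so $x+y\ne r$) — such a pair is coprime by the induction hypothesis; or (ii) a pair of the form $\{x,r\}$ (or $\{r,y\}$) where $x,y$ were neighbors in $B_{r-1}$ with $x+y=r$. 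For the second type, the induction hypothesis gives $\gcd(x,y)=1$, and then Fact~\ref{fact:rel-prime} applied to $i=x$, $j=y$ gives that $x$ is coprime with $x+y=r$; symmetrically $y$ is coprime with $r$. Hence every neighbor pair in $B_r$ is coprime, completing the induction.

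The one point that needs a little care is whether the insertion of the new value $r$ can ever create a neighbor pair that is \emph{not} of the two types above — for instance, could two freshly inserted copies of $r$ end up adjacent to each other? This cannot happen, because a newly inserted $r$ always sits strictly between two previously-adjacent entries $x$ and $y$, so each inserted $r$ is flanked on both sides by old entries of $B_{r-1}$; two inserted copies are never adjacent. (One should also recall the mirror symmetry about the central $1$: the middle entry is $1$, which is coprime to everything, so the two central neighbor pairs $\{1,x\}$ cause no trouble, and in fact $1+x=r$ is possible only when $x=r-1$, in which case $r$ gets inserted next to the central $1$, and $\gcd(1,r)=1$ anyway.)

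I expect the main (and only real) obstacle to be the bookkeeping verifying that the local picture around an insertion is exactly the dichotomy (i)/(ii) — i.e., that the update operation on $B_{r-1}$ genuinely produces no other adjacencies than splitting existing ones — after which the arithmetic is immediate from Fact~\ref{fact:rel-prime}. Everything else is a routine induction.
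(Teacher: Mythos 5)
Your proposal is correct and follows essentially the same route as the paper: induction on $r$, with the base case $r=2$ and the inductive step handled by noting that any new neighbor pair involves an inserted $r$ between coprime neighbors $x,y$ of $B_{r-1}$ with $x+y=r$, so coprimality follows from Fact~\ref{fact:rel-prime}. The extra bookkeeping you include (that inserted copies of $r$ are never adjacent to each other) is a harmless refinement the paper leaves implicit.
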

	\begin{proof}
		We prove the claim by induction. For the base case of $r=2$, the claim follows from the fact that 1 and 2 are coprime.
		For the general case of $r \geq 3$, recall that $B_r$ was obtained from $B_{r-1}$ by inserting an element $r$ between all neighbors $i$ and $j$ in $B_{r-1}$ that summed to $r$.
		By the induction hypothesis, $gcd(i,j) = 1$, and, hence, by Fact~\ref{fact:rel-prime}, $gcd(i,r) = gcd(i,i+j) =1$ and $gcd(r,j) = gcd(i+j, j) = 1$.
		Therefore, any two neighbors in $B_r$ must be coprime.
	\end{proof}
	
	\begin{claim}\label{claim:exactly-once}
		For all $r\geq 3$, every ordered pair $(i,j)$ of coprime positive integers that sum to $r$
		occurs exactly once as neighbors in $B_{r-1}$.
	\end{claim}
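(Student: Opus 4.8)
The plan is to prove the statement by induction on $r$, tracking how the sequence $B_{r-1}$ is built up from $B_2$ by the insertion operation. The base case $r=3$ is immediate: $B_2$ is the sequence $(\ldots,2,1,2,\ldots)$ (with the mirror symmetry about the central $1$), so the only ordered coprime pairs summing to $3$ are $(1,2)$ and $(2,1)$, each occurring exactly once as a pair of consecutive entries. For the inductive step, assume the claim holds for all values up to $r-1$, so in particular every ordered coprime pair summing to $r-1$ occurs exactly once as neighbors in $B_{r-2}$, and hence — by the insertion rule — $B_{r-1}$ is obtained from $B_{r-2}$ by inserting one copy of $r-1$ between each such pair.

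The key structural observation I would isolate is the following \emph{parent} relation: if $(a,b)$ are neighbors in $B_{r-1}$ with $a+b = r$, then one of them, say $b$, was already a neighbor of the appropriate entry in $B_{r-2}$, while the other, $a$, was inserted at some earlier round as the sum of $b$ and a third value $c$, i.e. $a = b + c$ with $(c,b)$ or $(b,c)$ neighbors in an earlier sequence — unless $a$ or $b$ equals $1$ or $2$ and lives in the base sequence. Concretely, every entry $a > 2$ of $B_{r-1}$ has a well-defined pair of neighbors $(x,y)$ with $x+y = a$, the pair that triggered its insertion, and $\{x,y\}$ are the two neighbors of $a$ in $B_{r-1}$ itself (inserted elements are never separated from their parents by later insertions, since any later insertion between $a$ and $x$ would require $a+x$ to be a round number, but then $x$ would have been inserted \emph{after} $a$, contradicting that $x$ was present when $a$ was inserted). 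This gives a bijection, via $(i,j) \mapsto$ the pair $(\min,\ \text{other})$ appropriately, between ordered coprime pairs $(i,j)$ with $i+j=r$ and ordered coprime pairs summing to $\max(i,j)$ — exactly the bijection underlying Fact~\ref{fact:rel-prime}. Running this down to the base case $B_2$ shows each such pair appears, and the uniqueness of the parent at each step (guaranteed by the induction hypothesis applied at the smaller value $\max(i,j) < r$) shows it appears exactly once.

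I would organize the writeup as: (i) establish that inserted elements stay adjacent to their parents forever (a short argument on round numbers and the order of insertions); (ii) use this to define the parent map and show it sends a neighbor pair $(i,j)$ summing to $r$ in $B_{r-1}$ to a neighbor pair summing to $\max(i,j)$ in an earlier sequence, with $\gcd$ preserved by Fact~\ref{fact:rel-prime}; (iii) invoke the induction hypothesis on $\max(i,j)$ to get existence and uniqueness, being careful that $\max(i,j) \le r-1$ and handling the boundary cases where $\min(i,j)\in\{1,2\}$ (these are exactly the pairs whose parent lies in $B_2$ itself). The mirror symmetry of the sequences about the central $1$ lets me treat the two orders of a pair uniformly, effectively halving the bookkeeping.

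The main obstacle I anticipate is step (i): making fully rigorous the claim that no later insertion ever slips between an element and one of its original neighbors, and simultaneously that the two current neighbors of any non-base element are precisely its insertion parents. This requires carefully ordering the rounds and arguing that if $a$ was inserted between $x$ and $y$ at round $a = x+y$, then at no later round $r' > a$ can an element be inserted between $a$ and $x$, since that would need $a + x = r'$ to be the round, but $x < a$ forces $x$ to have been present at round $a$, and the value $a+x$ exceeds $a$ — so such an insertion happens strictly after $a$'s, yet it would also have to happen when $a+x$ first becomes the current round, which is consistent; the real point is rather that the pair $(a,x)$ summing to $a+x$ is coprime (by Fact~\ref{fact:rel-prime}) hence will trigger exactly one insertion, and that insertion is precisely what creates the next-level element, not a spurious separation. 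Pinning down this monotonicity-of-rounds argument cleanly is where the proof needs the most care; everything after it is a routine descent through the induction.
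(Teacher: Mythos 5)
Your overall descent---mapping an ordered coprime pair $(i,j)$ with $i+j=r$ to a pair summing to $\max(i,j)$ via Fact~\ref{fact:rel-prime} and invoking the induction hypothesis at $\max(i,j)$---is exactly the route the paper takes. But the structural lemma you make it rest on, your step (i), is false as stated. It is not true that an inserted element stays adjacent to its parents forever, nor that the two current neighbors of every entry $a>2$ of $B_{r-1}$ are the pair that triggered its insertion: the element $3$ is inserted between its parents $1$ and $2$ at round $3$, but already in $B_4=(\dots,1,4,3,2)$ it has been separated from $1$ by the insertion of $4$, and in $B_5=(\dots,1,5,4,3,5,2)$ the neighbors of $4$ are $5$ and $3$, which do not sum to $4$. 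Your own parenthetical argument in fact concedes the point: an insertion between $a$ and its parent $x$ at round $a+x$ is perfectly consistent with the construction---indeed it is guaranteed to happen, since $a$ and $x$ are coprime neighbors. The closing paragraph of your proposal senses this difficulty but does not produce a correct replacement statement, so as written the proof rests on a false claim.

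What saves the argument, and what the paper actually proves, is the round-limited version of your step (i): an adjacency between values $i$ and $j$ with $i>j$ can only be \emph{created} at round $i$ (copies of the value $i$ appear only at round $i$, and each is placed next to a parent pair summing to $i$, so the neighbor $j$ must be the parent $i-j$'s partner), and once created it can only be \emph{destroyed} at round $i+j$, since the only element ever inserted between neighbors $i$ and $j$ is the value $i+j$. For the pairs in the claim, $i+j=r>r-1$, so every occurrence of $(i,j)$ as neighbors in $B_{r-1}$ is precisely an occurrence created at round $i$ and left untouched through rounds $i+1,\dots,r-1$; existence and uniqueness then follow from the induction hypothesis applied to the parent pair $(i-j,j)$ in $B_{i-1}$, whose coprimality is Fact~\ref{fact:rel-prime}. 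Replacing your global ``parents stay adjacent forever'' claim with this two-sided bookkeeping (creation only at round $\max(i,j)$, destruction only at round $i+j$) turns your sketch into the paper's proof.
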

	\begin{proof}
		We prove the claim by strong induction. The reader can verify the base case (when $r=3$). 
		For the inductive step, suppose the claim holds for all $k\leq r-1$ for some $r\geq 4$.
		Consider an ordered pair $(i,j)$ of coprime positive integers that sum to $r$.
		Assume that $i > j$; we know that $i \neq j$, and the case of $i < j$ is symmetric.
		Since $r\geq 4$, we have that $i\geq 3$.
		In the recursive construction of the sequences $\{B_k\}$, the elements $i$ are added to the sequence $B_i$ when $B_i$ is created from $B_{i-1}$.
		Since $j<i$, all the elements $j$ are already present in $B_{i-1}$.
		By Fact~\ref{fact:rel-prime}, since $gcd(i,j)=1$, we get that $gcd(i-j,j)=1$.
		By the inductive hypothesis, pair $(i-j,j)$ appears exactly once as an ordered pair of neighbors in $B_{i-1}$.
		Consequently, $(i,j)$ must appear exactly once as an ordered pair of neighbors in $B_i$.
		No new elements $i,j$ are added to the sequence in later stages, when $k>i$.
		Also, no new elements are inserted between $i$ and $j$ when $i+1\le k\le i+j-1 = r - 1$.
		Therefore, the ordered neighbor pair $(i,j)$ appears exactly once in $B_{r-1}$.
	\end{proof}

\subsection{Lower bound}\label{subsec:cnn-lowerbound}

In this section, we prove Theorem~\ref{theorem:lb-bip-chain-graphs}, namely
that the readability of $C_{n,n}$ is $\Omega(\log n)$.
First, we will need the notion of a HUB decomposition from~\cite{ChikhiMMR16}.
Given $G = (V_s, V_p, E)$ and a function $w: E \to [k]$,
we define $G_i$, for $i\in[k]$, as the graph with the same vertex set as $G$ and edges given by
$E(G_i) = \setof{e \in E}{w(e) = i}$.
Observe that the edge sets of $G_1,\dots, G_k$ form a partition of $E$.
We say that $w$ is a {\em hierarchical-union-of-bicliques decomposition}, abbreviated as {\em HUB decomposition},
if the following conditions hold:
i) for all $i\in[k]$, $G_i$ is a disjoint union of bicliques, and
ii) if two distinct vertices $u$ and $v$ are non-isolated twins in $G_i$ for some
$i\in\{2,\dots,k\}$ then, for all $j\in [i-1]$,
$u$ and $v$ are (possibly isolated) twins  in $G_j$.
The parameter $k$ is called the size of the decomposition $w$.
Now, consider a HUB decomposition of $C_{n,n}$ of size $h$.

\begin{lemma}\label{lemma:max-degee-in-hub-decomposition}
	For each $i\in\{0,\dots, h-1\}$, graph $G_{h-i}$  has maximum degree at most~$2^i$.
\end{lemma}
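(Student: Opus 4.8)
The plan is to induct on $i$. For the base case $i=0$, condition (i) of the HUB decomposition says $G_h$ is a disjoint union of bicliques; I need to argue its maximum degree is at most $2^0 = 1$, i.e.\ that $G_h$ is a matching. This is where the twin-free structure of $C_{n,n}$ enters: if some vertex $v$ had two neighbors $u_1, u_2$ in $G_h$, then since $G_h$ is a disjoint union of bicliques, $u_1$ and $u_2$ lie in the same biclique and hence are non-isolated twins in $G_h$; by condition (ii) they would be twins in every $G_j$ with $j < h$, and therefore twins in $G = C_{n,n}$ (their neighborhoods across all the $G_j$ agree, so their neighborhoods in $G$ agree). But $C_{n,n}$ is twin-free, a contradiction. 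So $G_h$ has maximum degree at most~$1$.

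For the inductive step, suppose $G_{h-(i-1)}, \dots, G_h$ all have the claimed degree bounds and consider $G_{h-i}$. Let $v$ be any vertex and suppose it has neighbors $u_1, \dots, u_d$ in $G_{h-i}$. Since $G_{h-i}$ is a disjoint union of bicliques, all the $u_t$ lie in one biclique of $G_{h-i}$, so any two of them are non-isolated twins in $G_{h-i}$; by condition (ii), any two of them are twins in $G_j$ for every $j \le h-i-1$. Consequently, within the graph $H := G_{h-i} \cup G_{h-i+1} \cup \dots \cup G_h$ (the union of the ``top'' $i+1$ levels), the neighborhoods of the $u_t$ can differ only on edges of $G_{h-i+1}, \dots, G_h$. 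But $G = C_{n,n}$ is twin-free, so the $u_t$ cannot all have the same $G$-neighborhood; since they agree on the bottom levels $G_1,\dots,G_{h-i-1}$, they must be pairwise distinguished using edges of the top $i$ levels $G_{h-i+1},\dots,G_h$. The key combinatorial point is that the number of vertices that can be pairwise distinguished in this way is bounded by the total number of edges available at the top levels incident to the relevant region — more precisely, each of the $d$ vertices $u_t$ lies in a common biclique component of $G_{h-i}$ together with $v$, and to separate them we need, for each pair, an edge of some $G_{h-i+r}$ ($1 \le r \le i$) present at one but not the other; bounding the degrees in those levels by $2^0, 2^1, \dots, 2^{i-1}$ via the induction hypothesis shows $d-1 \le 2^0 + 2^1 + \dots + 2^{i-1} = 2^i - 1$, hence $d \le 2^i$.

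The main obstacle I anticipate is making the counting argument in the inductive step precise: one must track exactly which vertices sit in which biclique at each level and verify that the twin-propagation condition (ii) forces all the ``separating'' edges to come from the top $i$ levels, and that these edges, being incident to a bounded-degree neighborhood, can distinguish at most $2^i - 1 + 1$ vertices. I would set this up by fixing the biclique component $K$ of $v$ in $G_{h-i}$, letting $L$ be the side of $K$ containing $u_1,\dots,u_d$, and arguing that the map sending each $u_t$ to its tuple of neighborhoods in $G_{h-i+1},\dots,G_h$ is injective on $L$ (by twin-freeness of $G$ together with (ii)); then a clean way to finish is to observe that a set of vertices pairwise distinguished by graphs of maximum degrees $2^0,\dots,2^{i-1}$ has size at most $1 + \sum_{r=0}^{i-1} 2^r = 2^i$, since adding one more bounded-degree graph of maximum degree $\Delta$ can at most $\Delta$-fold the number of equivalence classes is too weak — instead the bound follows because each new vertex beyond the first needs its own private distinguishing edge. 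I would phrase the final estimate so that it telescopes cleanly to $2^i$.
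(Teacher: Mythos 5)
Your base case is fine and matches the paper. The gap is in the inductive step, at the claim that pairwise distinguishing the neighbors $u_1,\dots,u_d$ using the top levels $G_{h-i+1},\dots,G_h$ forces $d-1\le 2^0+2^1+\cdots+2^{i-1}$. The ``private distinguishing edge'' count does not give this: the distinguishing edges for different pairs sit at \emph{different} vertices $u_t$, so their total number is only bounded by $d\,(2^i-1)$, not by $2^i-1$, and a graph of small maximum degree can distinguish arbitrarily many vertices --- a perfect matching (maximum degree $1$) already gives every matched vertex a distinct nonempty neighborhood. In fact no argument using only twin-freeness of $C_{n,n}$ plus the induction hypothesis can close this step: take $G_1=K_{m,m}$ on $a_1,\dots,a_m$ and $b_1,\dots,b_m$, and $G_2$ a pendant matching joining each $a_t$ to a new vertex $c_t$ and each $b_t$ to a new vertex $d_t$. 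The resulting graph is twin-free, $G_2$ has no non-isolated twins (so condition (ii) holds vacuously), both levels are disjoint unions of bicliques, yet $G_1$ has maximum degree $m\gg 2$. So your intermediate claim (``distinguished by graphs of maximum degrees $2^0,\dots,2^{i-1}$ implies at most $2^i$ vertices'') is false, and the lemma genuinely needs a property of $C_{n,n}$ beyond twin-freeness.

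The paper's proof supplies exactly that missing ingredient: in $C_{n,n}$ all vertices in the same part have pairwise \emph{distinct degrees}. Setting $F_i=G_{h-i+1}\cup\cdots\cup G_h$, the induction hypothesis gives maximum degree at most $2^i-1$ in $F_i$. If two vertices are non-isolated twins in $G_{h-i}$, condition (ii) makes them twins in $G_1\cup\cdots\cup G_{h-i}=C_{n,n}-E(F_i)$, hence they have equal degree there; since their degrees in $C_{n,n}$ differ, their degrees in $F_i$ must be pairwise distinct. But only $2^i$ degree values $\{0,1,\dots,2^i-1\}$ are available in $F_i$, so each twin class of $G_{h-i}$ --- in particular each side of a biclique component of $G_{h-i}$ --- has at most $2^i$ vertices, giving the bound. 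In short: replace your count of \emph{distinct neighborhoods} in the top levels (unbounded) by a count of \emph{distinct degrees} (at most $2^i$), which is where the distinct-degree structure of the chain graph is used.
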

\begin{proof}
	We prove the lemma by strong induction on $i$.
	The base case is when $i=0$.
	Observe that if $G_h$ has non-isolated twins, then those must be twins in $G_j$ for each $j \in [h]$,
	and, as a result, in $C_{n,n}$.
	Since $C_{n,n}$ has no twins, $G_h$ has no non-isolated twins.
	By the first property of the HUB decomposition, $G_h$ must have maximum degree at most~1.

		
	For general $i$,
	let $F_i$ denote the graph $(V_s,V_p,\bigcup_{j\in\{0,\dots,i-1\}} E(G_{h-j}))$.
	By the inductive hypothesis, $F_i$ has maximum degree at most~$\sum_j 2^j=2^i-1$.
	Consider a group of vertices $S$ in the same part of $C_{n,n}$ that have the same degree in the graph $C_{n,n} - E(F_i)$.
	Since no two vertices in the same part of $C_{n,n}$ have the same degree, no two vertices in $S$ have the same degree in $F_i$. Combining this with the fact that the degree of any vertex in $F_i$ is at most~$2^i-1$, we infer that $|S| \leq 2^i$.

	By the second property of the HUB decomposition, if two vertices are non-isolated twins in $G_{h-i}$,
	they are twins in $C_{n,n} - E(F_i)$.
	Consequently, each group of twins in $G_{h-i}$ has size at most~$2^i$.
	By the first property of the HUB decomposition, $G_{h-i}$ is a disjoint union of bicliques.
	It follows that each of these bicliques is a subgraph of the complete bipartite graph $K_{2^i,2^i},$
	implying the required bound on the maximum degree.
\end{proof}
\begin{proof}[Proof of Theorem~\ref{theorem:lb-bip-chain-graphs}]
	By Lemma~\ref{lemma:max-degee-in-hub-decomposition}, graph $G_{h-i}$ has at most~$2^i n$ edges.
	Since the edge sets of $G_1,\dots, G_h$ form a partition of the edge set of $C_{n,n}$,
	the number of edges in $C_{n,n}$ is
	$\frac{n(n+1)}2\leq \sum^{h-1}_{i=0} 2^i n=n(2^h-1).$
	We get that $h\geq \log_2(n+3)-1$. It was shown in~\cite{ChikhiMMR16} that the readability of every bipartite graph $G$
  is bounded from below by the minimum size of a HUB decomposition of $G$.
	This completes the proof.
\end{proof}

\section{A characterization of graphs with readability at most~2}\label{sec:readability-2-characterization}

In this section, we characterize bipartite graphs with readability at most~$2$ by proving Theorem~\ref{thm:readability-2-characterization}. Due to Lemma~\ref{lem:induced-subgraph}, it is enough to obtain such a characterization for connected twin-free bipartite graphs.
We use this characterization in Section~\ref{sec:readability-2} to develop a polynomial time algorithm for recognizing graphs of readability at most~$2$ and also in Section~\ref{sec:grids} to prove a lower bound on the readability of general grids.
Recall that a {\it domino} is the graph obtained from $C_6$ by adding an edge between two vertices at distance~$3$. 
We first define the notion of a feasible matching, which is implicitly used
in the statement of Theorem~\ref{thm:readability-2-characterization}.

	\begin{definition}\label{def:feasible-matching}
		A matching $M$ in a bipartite graph $G$ is {\em feasible} if the following conditions are satisfied:
		\begin{enumerate}
			\item The graph $G' = G-M$ is a disjoint union of bicliques (equivalently: $P_4$-free).
			\item For $U\subseteq V(G)$, if $G[U]$ is a $C_6$, then $G'[U]$ is the disjoint union of three edges.
			\item For $U\subseteq V(G)$, if $G[U]$ is a domino, then $G'[U]$ is the disjoint union of a $C_4$ and an edge.
		\end{enumerate}
	\end{definition}

We prove Theorem~\ref{thm:readability-2-characterization} by showing that a bipartite graph $G$ has readability at most~$2$ iff $G$ has a feasible matching.

\begin{proof}[Proof of Theorem~\ref{thm:readability-2-characterization}]
We show that $r(G)\le 2$ if and only if $G$ has a feasible matching.

\medskip
\noindent{\it Necessity.} Suppose that $G = (V_s, V_p, E)$ is a twin-free bipartite graph of readability at most~$2$. Let $\ell$ be an overlap labeling of $G$ of length at most~$2$. Since $\ell$ is an overlap labeling of $G$, we can partition the edge set of $G$ into two sets, $E_1$ and $E_2$, by setting
		$E_1 = \{(u,v)\in E\mid \ov_\ell(u,v) = 1\}$ and $E_2 = E\setminus E_1$. Then
		for all $(u,v)\in E_2$, we have $\ov_\ell(u,v) = 2$, that is, $\ell(u) = \ell(v)$.
		Note that due to the definition of the overlap function,
		for every edge $(u,v)\in E_2$, the labels of $u$ and $v$ must not have an overlap of length one.
		
		We claim that $E_2$ is a feasible matching.
		If $E_2$ is not a matching, we can assume by symmetry
		that there exists a vertex $u\in V_s$ and a pair of distinct vertices $v,w$ in $V_p$ such that
		$\{(u,v),(u,w)\}\subseteq E_2$. But then $\ell(v) = \ell(u) = \ell(w)$, which implies that $v$ and $w$ are twins in $G$,
		a contradiction. Thus, $E_2$ is a matching.
		
		Let $G'$ denote the graph $G - E_2$. Next, we show that $G'$ is $P_4$-free. If $(u,v,x,y)$ forms an induced $P_4$ in $(V,E_1)$ (with edge set $\{(u,v), (x,v),(x,y)\}$), then
		$\suf_1(\ell(u)) = \pre_1(\ell(v)) = \suf_1(\ell(x)) = \pre_1(\ell(y))$,
		implying that $(u,y)\in E_1$, a contradiction.
		Therefore, $G'$ is $P_4$-free.
		
		Now let us verify the remaining two properties in the definition of a feasible matching.
		Let $U$ be a subset of vertices in $G$. If $G[U]$ is isomorphic to $C_6$, we would like to show that $G'[U]$ is a union of three disjoint edges. Suppose for the sake of contradiction that it is not.
		Consider an edge labeling of $G[U]$ as in Figure~\ref{fig:C6-domino-fork}. Since $E_2$ is a matching, the only other way for $G'$ to be $P_4$-free, i.e., if it was not a union of three disjoint edges, is for $E_2$ to contain two diametrically opposite	edges of $G[U]$, say $e_1$ and $e_4$. Let $e_i = (x_i,x_{i+1})$ for all $i \in [6]$ (addition modulo $6$). Let, without loss of generality, $x_1 \in V_s$.
		Then $x_2 \in V_p$. Since $e_1 \in E_2$ by our assumption, we have $\ell(x_1) = \ell(x_2)$, say $\ell(x_1) = \ell(x_2)= ab$. We have $\suf_1(\ell(x_5)) = \pre_1(\ell(x_6)) = \suf_1(\ell(x_1)) = b$ and $\pre_1(\ell(x_4)) = \suf_1(\ell(x_3)) = \pre_1(\ell(x_2)) = a$. Since $e_4 \in E_2$, we get $\ell(x_4) = \ell(x_5) = ab$. Therefore $\ell(x_1) = \ell(x_4)$, which is a contradiction, since $(x_1,x_4) \notin E$ and $\ell$ is an overlap labeling of $G$.
		
\begin{figure}[t]
	\centering
	\includegraphics{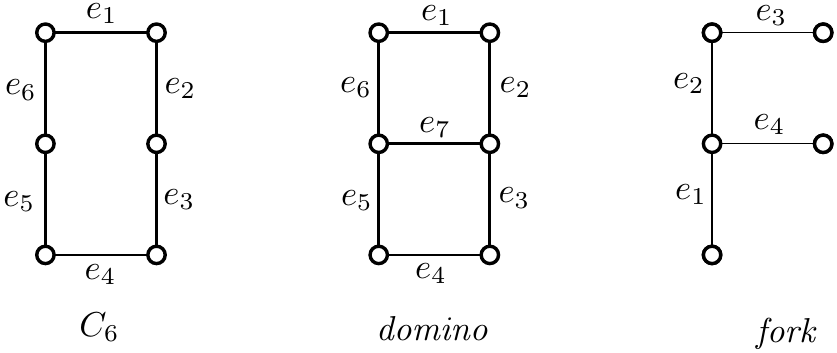}
	\caption{The $C_6$, the domino and the fork}
	\label{fig:C6-domino-fork}
\end{figure}
		
		Finally, suppose that $G[U]$ is isomorphic to the domino,
		and assume an edge labeling as in Figure~\ref{fig:C6-domino-fork}.
		Since $G'$ is $P_4$-free, $G'[U]$ is also $P_4$-free and hence
		$G'[U]$ can only be isomorphic to either (1) a disjoint union of a $C_4$ and an edge (which is what we want to show), or (2) a disjoint union of two $P_3$'s. Suppose we are in case (2). Then we have $e_1,e_4,e_7 \in E_2$.
		Let $e_i = (x_i,x_{i+1})$ for all $i \in \{1,\ldots, 6\}$ (addition modulo $6$). We may assume without loss of generality that $x_1\in V_s$.
		Since $e_1,e_4 \in E_2$ and $e_2,e_3,e_5,e_6 \in E_1$, we can follow the same reasoning as above, and conclude that the labels of $x_1$ and $x_4$ are equal, which is a contradiction, since $(x_1,x_4) \notin E$ and $\ell$ is an overlap labeling of $G$.
This establishes the necessity of the condition.
		
		
		\medskip
\noindent{\it Sufficiency.} Suppose now that $G = (V_s, V_p, E)$ is a twin-free bipartite graph with a feasible matching $M$.
		We will show that $G$ has readability at most~$2$ by constructing an overlap  labeling of $G$ of length at most~$2$.
		Since $M$ is a feasible matching, the graph $G' = G-M$ is $P_4$-free, that is, a disjoint union of bicliques.
		Let $\{A_1,B_1\},\ldots, \{A_k,B_k\}$ be the bipartitions of
		the vertex sets of the connected components (bicliques) $G_1,\ldots, G_k$ of $G'$ (so that $A_i = V(G_i)\cap V_s$ for all $i$; some of the $A_i$'s or $B_i$'s
		may be empty). 
		Then $\cup_{i = 1}^k V(G_i) = V$.
		Assign a partial labeling over $\Sigma = \{1,\ldots, k\}$
		to vertices of $G$ by setting $\ell(v) = i$ if and only if $v\in V(G_i)$.
		For each edge $(u,v)\in M$, extend the labels of $u\in V_s$ and $v\in V_p$ as
		follows.
		Let $u\in A_i$ and $v\in B_j$. Then $i\neq j$ because edges of bicliques in $G-M$ cannot be in $M$.
		Replace $\ell(u) = i$ with $\ell(u) = ji$,
		and $\ell(v) = j$ with $\ell(v) = ji$.
		Since $M$ is a matching, every vertex will have a label of length $1$ or $2$ at the end of this procedure.
		Extend the labels of length $1$ by unique new characters to make them of length $2$.
		By construction, the overlaps of the obtained labeling create all
		edges of $E(G')\cup M = E(G)$.
		
		Let us verify that no new edges were created by $\ell$.
		Suppose that $u,v$ is a pair of vertices with
		with $u\in V_s$ and $v\in V_p$ and $\ov_\ell(u,v)>0$.
		If $\ell(u)$ and $\ell(v)$ have an overlap of length $1$, then $(u,v)\in E(G')$ by construction.
		Suppose that $\ell(u)$ and $\ell(v)$ do not have an overlap of length $1$ but have an overlap of length $2$. Then
		$\ell(u)= \ell(v) = ij$ for two distinct $i,j\in \Sigma$.
		By construction, vertex $u$ is adjacent to a unique vertex $w$ via a matching edge in $M$,
		moreover $u\in A_j$ and $w\in B_i$. If $w = v$, then  the edge $(u,v)$ is in $M$ and hence in $G$.
		So we may assume that $w\neq v$. Similarly, vertex $v$ is adjacent to a unique vertex $z$ in $M$, and $z\in
		A_j$ and $v\in B_i$. If $u = z$, then again the edge $(u,v)$ is in $M$ and hence in $G$.
		So we may assume that $u\neq z$. Since $|A_j|\ge 2$, there exists a vertex $s\in B_j$.
		Similarly, since $|B_i|\ge 2$, there exists a vertex $t\in A_i$. Notice that $(u,v)\not\in M$ since $u$ is of degree $1$ in $M$, and
		$(u,v)\not\in E(G')$ since $u$ and $v$ belong to distinct connected components of
		$G'$. Therefore, $(u,v)\not\in E(G)$, and, similarly, $(z,w)\not\in E(G)$. But now, the subset $\{s,t,u,v,w,z\}$ induces a subgraph of $G$ isomorphic to either a $C_6$ (if $(t,s)\not\in E(G)$) or a domino (otherwise). In either case, one of the
		conditions for the $C_6$ and for the domino in Definition~\ref{def:feasible-matching} is violated, contrary to the fact that $M$ is
		a feasible matching.
		
		This shows that $\ell$ is an overlap labeling of $G$ and implies that the readability of $G$
		is at most~$2$.
	\end{proof}

\begin{corollary}\label{cor:path-cycle}
	Every bipartite graph $G$ of maximum degree at most~$2$ has readability at most~$2$.
\end{corollary}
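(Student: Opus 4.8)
The plan is to deduce Corollary~\ref{cor:path-cycle} from Theorem~\ref{thm:readability-2-characterization} by exhibiting, for every twin-free connected bipartite graph of maximum degree at most~$2$, a feasible matching. By Lemma~\ref{lem:induced-subgraph}(b) and (d), it suffices to handle connected twin-free graphs; a connected bipartite graph of maximum degree at most~$2$ is either a path $P_m$ or an even cycle $C_{2m}$. Among these, only very short ones can be twin-free: I would first observe that a path with too many vertices has twins (e.g., the two endpoints at the same side when the path is short, or more relevantly internal degree-$2$ vertices — actually one must check carefully which paths are twin-free), and likewise for cycles, $C_4$ is the only twin-free even cycle since in $C_6$ the two vertices at distance~$2$... wait, in $C_{2m}$ with $m\ge 3$ opposite-side vertices at distance $\ge 3$ are never twins, but same-side vertices can coincide in neighborhood only when the cycle is short; in fact $C_6$ is twin-free and $C_4$ is twin-free, whereas larger cycles are also twin-free. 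So the reduction to a finite list does not obviously work, and I would instead give a direct construction of a feasible matching for arbitrary paths and even cycles.

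The key construction: for a path $P_m$ with vertices $u_1,\dots,u_m$ in order, take $M$ to be a perfect (or near-perfect) matching of the path, say $M=\{u_1u_2,\ u_3u_4,\ \dots\}$. Then $G-M$ consists of isolated vertices together with the ``odd'' edges $u_2u_3,u_4u_5,\dots$, which is a disjoint union of edges and isolated vertices, hence $P_4$-free — property~1 holds. For an even cycle $C_{2m}$ with $m\ge 2$, take $M$ to be one of the two perfect matchings; then $G-M$ is the other perfect matching, again a disjoint union of edges, so property~1 holds. For properties 2 and 3 I need to check every vertex subset $U$ with $G[U]\cong C_6$ or $G[U]\cong\text{domino}$. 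Since $G$ has maximum degree at most~$2$, no induced subgraph can be a domino (the domino has a vertex of degree~$3$), so property~3 is vacuous. For property~2: if $G[U]\cong C_6$ then, because $G$ has maximum degree~$2$ and is connected, $G$ itself must be $C_6$ and $U=V(G)$; then $G-M$ is a perfect matching on $6$ vertices, i.e., three disjoint edges — exactly what property~2 demands. For paths, $G$ has no induced $C_6$ at all. Hence in every case $M$ is feasible, and Theorem~\ref{thm:readability-2-characterization} gives $r(G)\le 2$.

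Putting this together: given an arbitrary bipartite graph $G$ of maximum degree at most~$2$, write ${\it TF}(G)$ as a disjoint union of connected twin-free bipartite graphs, each of which still has maximum degree at most~$2$ and hence is a path or even cycle; each has readability at most~$2$ by the above, so by Lemma~\ref{lem:induced-subgraph}(b) and (d), $r(G)=r({\it TF}(G))\le 2$. I expect the main obstacle to be the bookkeeping in verifying properties 2 and 3 of feasibility, but the degree bound makes this essentially immediate — the domino and any graph containing $C_6$ as a proper induced subgraph both force a degree-$3$ vertex. A minor subtlety is parity: for an odd-length path $P_m$ the chosen matching $M$ leaves the last vertex unmatched, which is harmless since an isolated vertex in $G-M$ only helps $P_4$-freeness; I would state the matching uniformly as ``a maximum matching obtained by greedily pairing consecutive vertices'' to sidestep case analysis on parity.
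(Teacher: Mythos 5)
Your argument is correct and is essentially the paper's proof: the paper likewise reduces to connected twin-free graphs of maximum degree at most~$2$ (paths or even cycles), splits their edge set into two matchings, and notes each is feasible, then invokes Theorem~\ref{thm:readability-2-characterization}. (Your stray aside that $C_4$ is twin-free is false---its two same-side vertices are twins---but it plays no role in the construction you actually use.)
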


\begin{proof}
If $G$ is a connected twin-free bipartite graph of maximum degree at most~$2$, then $G$ is a path or an (even) cycle.
In this case, the edge set of $G$ can be decomposed into two matchings $M_1$ and $M_2$. Both $M_1$ and $M_2$ are feasible matchings. Thus, by
Theorem~\ref{thm:readability-2-characterization},
$G$ has readability at most~$2$.
\end{proof}

\section{An efficient algorithm for readability $\boldsymbol{2}$}\label{sec:readability-2}

In this section, we prove Theorem~\ref{thm:readability-2} by developing a polynomial time algorithm for the following problem.

\begin{center}
	\fbox{\parbox{0.89\linewidth}{\noindent
			{\sc Readability $2$}
			\\[.8ex]
			\begin{tabular*}{.95\textwidth}{rl}
				{\em Instance:} & A bipartite graph $G = (V_s,V_p,E)$.\\
				{\em Question:} & Is $r(G)\le 2$?
			\end{tabular*}
	}}
\end{center}
\medskip


First, we use Lemma~\ref{lem:induced-subgraph} and Corollary~\ref{cor:path-cycle}
to reduce the problem to connected twin-free bipartite graphs of maximum degree at least 3.
We then apply
Theorem~\ref{thm:readability-2-characterization}
and reduce the problem to checking for the existence of a feasible matching (Definition~\ref{def:feasible-matching}).
Finally, we show how  to reduce this problem to the 2SAT problem (Lemma~\ref{lem:2SAT}), which is well known to be solvable in linear time~(see, e.g.,~\cite{MR526451}).

\begin{proof}[Proof of Theorem~\ref{thm:readability-2}.]
Given a bipartite graph $G$, we first reduce the problem to its connected components.
That is, if $G$ is not connected, then,
by Lemma~\ref{lem:induced-subgraph}{(b)},
$r(G)\le 2$ if and only if all components $G'$ of $G$ satisfy $r(G')\le 2$.
Second, assuming $G$ is connected, we compute the twin-free reduction $G'$ of $G$,
which, by Lemma~\ref{lem:induced-subgraph}{(d)}, does not change the readability.
We test whether $G'$ is of maximum degree at most~$2$.
If this is the case, then,
by Corollary~\ref{cor:path-cycle},
we assert that $G$ has readability at most~$2$.

Consider a connected twin-free bipartite graph $G = (V,E)$ of maximum degree at least~$3$. Let $E'$ denote the set of all edges $e=(u,v)$ in $G$ such that either (1) $\{u,v\}\cup N(u) \cup N(v)$ has a vertex of degree at least $3$, or (2) $e$ is contained in some induced $C_6$. The definition of $E'$ and the fact that $G$ is connected and of maximum degree at least $3$ imply
that if an induced subgraph $H$ of $G$ is isomorphic to a $C_4$, a fork, a $C_6$, or a domino (see Figure~\ref{fig:C6-domino-fork}), then $E(H)\subseteq E'$.

Let $X = \{x_e\mid e\in E'\}$ be a set of variables. We now define a 2SAT formula $\varphi$ over $X$ such that $G$ has a feasible matching (and hence, readability at most~$2$) if and only if $\varphi$ is satisfiable. The formula $\varphi$ contains the following five types of clauses.


\begin{enumerate}
	\item For each pair $\{e,f\}\subseteq E'$ of distinct edges that share an endpoint, add the clause $\overline{x_e} \vee \overline{x_f}$ to $\varphi$.
	\item For each induced subgraph $H$ of $G$ isomorphic to $C_4$ and each matching $\{e,f\}$ in $H$, add the clauses $\overline{x_e} \vee x_f$ and $\overline{x_f} \vee x_e$ (equivalent to $x_e\leftrightarrow x_f$) to $\varphi$.
	\item For each induced subgraph $H$ of $G$ isomorphic to $C_6$, with edges labeled as in Figure~\ref{fig:C6-domino-fork},
	add the clause $x_{e_1}\vee x_{e_2}$,
	the clauses corresponding to $x_{e_1}\leftrightarrow x_{e_3}$ and $x_{e_3}\leftrightarrow x_{e_5}$, and
	the clauses corresponding to $x_{e_2}\leftrightarrow x_{e_4}$ and $x_{e_4}\leftrightarrow x_{e_6}$ to $\varphi$. 
	\item For each induced subgraph $H$ of $G$ isomorphic to the domino, with edges labeled as in Figure~\ref{fig:C6-domino-fork},
	add the clauses $x_{e_2}\vee x_{e_3}$ and $x_{e_5}\vee x_{e_6}$ to $\varphi$.
	\item For each induced subgraph $H$ of $G$ isomorphic to the fork, with edges labeled as in Figure~\ref{fig:C6-domino-fork},
	add the clause $x_{e_2}\vee x_{e_3}$ to $\varphi$.
\end{enumerate}

The following lemma shows that if $\varphi$ is satisfiable, then $r(G)\le 2$, otherwise, $r(G)>2$.
\begin{lemma}\label{lem:2SAT}
Graph $G$ has a feasible matching if and only if formula $\varphi$ is satisfiable.
\end{lemma}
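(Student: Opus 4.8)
The plan is to prove Lemma~\ref{lem:2SAT} by establishing a correspondence between satisfying assignments of $\varphi$ and feasible matchings of $G$, where a truth assignment $\tau$ corresponds to the edge set $M_\tau = \{e \in E' : \tau(x_e) = \text{true}\}$. The key observation that makes this clean is that, thanks to the construction of $E'$, every induced $C_4$, fork, $C_6$, or domino of $G$ has \emph{all} its edges in $E'$, so the conditions defining a feasible matching (Definition~\ref{def:feasible-matching}) only ever constrain edges of $E'$; conversely, edges outside $E'$ are ``harmless'' and need never be placed in the matching. First I would record this as a preliminary observation and also note that, since $G$ has maximum degree at least $3$ and is connected and twin-free, $G$ is not itself a path or cycle, so $G$ contains an induced $C_4$ or fork, which is where the degree-$\ge 3$ assumption is actually used.

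For the forward direction (``$M$ feasible $\Rightarrow$ $\varphi$ satisfiable''), given a feasible matching $M$, I would first argue that $M \subseteq E'$: any matching edge $e = (u,v)$ with an endpoint of degree $\ge 1$ on the other side lies in a $P_3$ of $G$, and if neither (1) nor (2) held for $e$ then $e$ together with such a $P_3$ would, after deleting $M$, leave a $P_4$ in $G-M$ --- more carefully, one checks that an edge of a feasible matching whose removal matters must touch a $C_4$ or be in a $C_6$, hence is in $E'$ (a short case analysis; an isolated edge of $G$ can be dropped from $M$ harmlessly). Then set $\tau(x_e) = $ true iff $e \in M$ and verify each clause type: type 1 clauses hold because $M$ is a matching; type 2 clauses hold because in an induced $C_4$, deleting $M$ must leave a biclique, which forces $M$ to restrict to a perfect matching of the $C_4$ or to the empty set, i.e., the two matching edges are selected together; type 3 clauses encode that $G'[U]$ for an induced $C_6$ is three disjoint edges, and by the argument in the necessity part of Theorem~\ref{thm:readability-2-characterization}, no two antipodal edges can both be absent while $G-M$ stays $P_4$-free, giving $x_{e_1} \vee x_{e_2}$, and $P_4$-freeness forces the three pairs of opposite edges to agree; types 4 and 5 (domino and fork) are handled analogously using the corresponding conditions.

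For the reverse direction (``$\varphi$ satisfiable $\Rightarrow$ $M$ feasible''), I would take a satisfying $\tau$, set $M = M_\tau \subseteq E'$, and verify the three conditions of Definition~\ref{def:feasible-matching}. That $M$ is a matching is immediate from the type 1 clauses. For $P_4$-freeness of $G' = G - M$: a $P_4$ in $G'$ would sit inside an induced $C_4$ or an induced fork of $G$ (this is the standard fact that every $P_4$ in a graph extends to an induced $C_4$ or ``fork''/$P_4$-with-a-pendant depending on the non-edges, and here it lies in $G$ whose relevant induced subgraphs all have edge sets in $E'$); the type 2 clauses force both diagonal edges of any induced $C_4$ into $M$ or neither, so $G'$ restricted to a $C_4$ is a $C_4$, a perfect matching, or depends on one selection --- in all satisfying cases the $P_4$ is broken; and the type 5 clause forces one of the two ``outer'' edges $e_2, e_3$ of a fork into $M$, killing the induced $P_4$ through the fork. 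Finally, conditions 2 and 3 (the $C_6$ and domino conditions) follow directly from the type 3 and type 4 clauses together with the already-established $P_4$-freeness, since those clauses were designed precisely to enumerate the allowed patterns of present/absent edges.

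The main obstacle I anticipate is the $P_4$-freeness verification in the reverse direction: one must argue that \emph{every} induced $P_4$ in $G-M$ arises from a configuration controlled by the clauses, i.e., that the only ways a $P_4$ can be created by removing a matching are through an induced $C_4$ or an induced fork of $G$ (rather than some larger structure). This requires a clean case analysis on how a $3$-edge path $a\!-\!b\!-\!c\!-\!d$ in $G-M$ relates to $G$: the removed edges incident to $\{a,b,c,d\}$ form a (partial) matching, and depending on which of the possible chords $ac$, $bd$, $ad$ are edges of $G$ and whether endpoints of $M$-edges leave the set, one gets a $C_4$, a fork, or a situation that contradicts $M$ being a matching; I would also need the edge-set-in-$E'$ observation to invoke the relevant clauses. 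A secondary technical point is carefully handling degenerate cases (isolated edges, components that are small paths) and the direction $M \subseteq E'$, but these are routine given the reductions already performed before the lemma.
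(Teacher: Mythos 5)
Your forward direction is essentially the paper's (restrict the indicator of $M$ to the variables $x_e$, $e\in E'$, and check each clause type), and the detour through the claim ``$M\subseteq E'$'' is unnecessary --- and in fact false in general --- but harmless there, since edges of $M$ outside $E'$ simply carry no variables. The genuine gap is in the reverse direction. You set $M=M_\tau=\{e\in E': \tau(x_e)=\text{true}\}$ and assert that edges outside $E'$ ``need never be placed in the matching'' because every induced $P_4$ of $G-M$ would sit inside an induced $C_4$ or fork of $G$. That is not true: $E'$ omits exactly those edges all of whose endpoints and neighbors have degree at most $2$ and which lie in no induced $C_6$, and an induced $P_4$ of $G$ living entirely in such a region survives in $G-M_\tau$ untouched, because none of its edges has a variable and no clause mentions it. Concretely, let $G$ be the path $v_1v_2\cdots v_{10}$ with a pendant vertex attached to $v_3$; this is connected, twin-free, of maximum degree $3$, it has a feasible matching (so $\varphi$ is satisfiable), yet none of the edges $(v_6,v_7),(v_7,v_8),(v_8,v_9)$ is in $E'$, so $G-M_\tau$ contains the induced $P_4$ on $v_6,\dots,v_9$ and $M_\tau$ is not feasible. (Your auxiliary claim that every $P_4$ extends to an induced $C_4$ or fork fails here for the same reason, as does ``$M\subseteq E'$'': any feasible matching of this $G$ must use an edge outside $E'$.)

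The paper closes precisely this gap: from a satisfying assignment it takes $M'=M_\tau$ and then \emph{greedily extends} it, repeatedly adding the middle edge of any induced $P_4$ of $G$ containing no edge of the current set. The nontrivial content is a claim that this extension stays a matching and that every added edge lies outside $E'$ (so $M\cap E'=M'\cap E'$ and the clause-based analysis of $C_4$'s, $C_6$'s and dominoes still applies to $M$); proving it requires a case analysis showing that if an added middle edge were in $E'$, the surrounding degree-$\ge 3$ vertex or induced $C_6$ would produce a $C_4$, fork, or $C_6$ whose type (2), (5), or (3) clauses force some edge of that $P_4$ into $M'$, a contradiction. Only after that does one verify $P_4$-freeness of $G-M$ and the $C_6$/domino conditions. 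Your proposal contains neither the extension step nor this claim, so as written the ``$\varphi$ satisfiable $\Rightarrow$ feasible matching exists'' direction does not go through.
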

\begin{proof}
Suppose first that $G$ has a feasible matching, say $M$.
Let $a$ be an assignment of Boolean values to the variables in $X$ such that
for every $e\in E'$, variable $x_e$ is true if and only if $e\in M$.
We will prove that $a$ is a satisfying assignment for $\varphi$.
It is easy to see that clauses of type (1) in $\varphi$ are satisfied as $M$ is a matching.

Consider a pair of clauses $\overline{x_e} \vee x_f$ and $\overline{x_f} \vee x_e$ of type (2) in $\varphi$.
These correspond to an induced subgraph $H$ of $G$ isomorphic to a $C_4$ and a matching $\{e,f\}$ in $H$.
Since $M$ is a feasible matching, the graph $G - M$ is $P_4$-free, and so we have $e\in M$ if and only if $f\in M$. Hence $a$ satisfies both the clauses.

Clauses in $\varphi$ of type (3) deal with induced $6$-cycles and those of type (4) deal with induced dominos. Both types of clauses are satisfied by $a$ due to
the fact that $M$, which is a feasible matching, satisfies conditions 2 and 3 in Definition~\ref{def:feasible-matching}.

Finally, clauses in $\varphi$ of type (5) are satisfied only if for each induced subgraph $H$ of $G$ isomorphic to the fork
(with edges labeled as in Figure~\ref{fig:C6-domino-fork}), we have
$\{e_2,e_3\}\cap M\neq \emptyset$.
Suppose for the sake of contradiction that there exists an induced fork $H$ for which this is not the case.
Since $G-M$ is $P_4$-free, so is $H-M$ and hence
$e_1$ and $e_4$ are both in $M$, which is a contradiction.
This shows that formula $\varphi$ is satisfiable.

\medskip
For the converse direction, suppose that formula $\varphi$ is satisfiable and let $a$ be a satisfying assignment.
Let $M'$ be the set of edges $e\in E'$ such that $x_e$ is set to true in $a$.
Extend $M'$ greedily to a set of edges $M$ by setting $M= M'$ and then iteratively adding
the middle edge of any induced subgraph $H$ of $G$ isomorphic to $P_4$ that contains no edge of $M$.
We claim that the so obtained set $M$ is a feasible matching of $G$.
This will be easy to show once we prove the following claim.

\begin{claim}
	$M$ is a matching in $G$ with $M\cap E' = M'\cap E'$.
\end{claim}

\begin{proof}
	The claim is true if $M = M'$, since $M'$ is a matching by virtue of type (1) clauses. Henceforth, assume that $M\neq M'$. We will first show that $M\cap E' = M'\cap E'$. For this, it is enough to prove that $e \notin E'$ for each $e \in M\setminus M'$.
	
	Consider an edge $e \in M\setminus M'$. By our construction of $M$, the edge $e$ is the middle edge of an induced subgraph $H$ of $G$ isomorphic to $P_4$ such that $H$ contains no other edge of $M$. In particular, $H$ has no edge of $M'$. Let $u$ and $v$ be the endpoints of $e$ and let $x$ and $y$ be the remaining two vertices in $H$ such that $(x,u)$ and $(v,y)$ are the other two edges in $H$. Assume for the sake of contradiction that $e \in E'$. Then, either (a) $\{u,v\}\cup N(u) \cup N(v)$ has a vertex of degree at least $3$, or (b) $e$ is contained in some induced $C_6$.
	
	Suppose first that (b) holds. Then, by virtue of the type (3) clauses, either $(u,v) \in M'$ or both $(x,u)$ and $(y,v)$ are in $M'$. Both cases contradict our premise that $H$ contains no edge of $M'$.
	
	Suppose now that (a) holds. Assume that the degree of $u$ is at least $3$. Let $w$ be a neighbor of $u$ such that $w\neq x$. We will show that the set $\{x,u,w,v,y\}$ induces a fork in $G$. Since $G$ is a bipartite graph, it has no $C_3$'s and hence $(w,x),(w,v) \notin E$. If $(w,y)\in E$, then the set $\{w,u,v,y\}$ induces a $C_4$.
	Since $u$ is of degree at least~$3$, we have $\{(w,u),(u,v),(v,y),(y,w)\} \subseteq E'$ and hence, by virtue of clauses of type (2), either $(u,v)$ and $(w,y)$ are in $M'$, or both $(u,w)$ and $(v,y)$ are in $M'$. Both of these contradict our premise that $H$ contains no edge of $M'$. Therefore, the set $\{x,u,w,v,y\}$ induces a fork in $G$, and by virtue of its associated type (5) clause, either $(u,v)$ or $(v,y)$ is in $M'$. This contradicts our assumption that $H$ does not have any edge in $M'$. Thus, the degree of $u$ is~$2$. By a symmetric argument, the degree of $v$ is~$2$. Thus, the only way for (a) to hold is for either $x \in N(u)$ or $y \in N(v)$ to have degree at least $3$. By symmetry, we may assume that $x$ has degree at least $3$. Let $s,t \in N(x)\setminus\{u\}$. Since $v$ is of degree~$2$, it is non-adjacent to both $s$ and $t$, hence the set $\{s,t,x,u,v\}$ induces a fork in $G$ and hence either $(x,u)$ or $(u,v)$ is in $M'$, a contradiction. Thus, we have proved that if $e \in M\setminus M'$, then $e \notin E'$ and therefore, $M\cap E' = M'\cap E'$.
	
	We will now show that $M$ is a matching. From the above arguments, we know that for each edge $(u,v)=e \in M\setminus M'$, degree of both $u$ and $v$ are at most~$2$. Thus, the only edges adjacent with $e$ are the ones that form the induced copy of $P_4$ with it. As neither of them are in $M$, the edge $e$ does not share an endpoint with any other edge of $M$. This completes the proof of the claim.
\end{proof}

It remains to verify that $M$ is a feasible matching. First, suppose for the sake of contradiction that $G-M$ is not $P_4$-free. Fix an induced $P_4$ in $G-M$, say $H$, with edges $\{(u,v),(v,w),(w,x)\}$. The set $V(H) = \{u,v,w,x\}$ does not induce a $P_4$ in $G$, for otherwise, we would have added one of the edges of $H$ to $M$. Thus, we have, $(u,x) \in E$, implying that, $(u,x) \in M$. Recall that since $G$ is connected and of maximum degree at least $3$, the set $V(H)$ contains a vertex of degree at least $3$ in $G$, which implies that all edges of the $C_4$ induced by $V(H)$ must be in $E'$. Since $M\cap E' = M'\cap E'$, we have in particular $(u,x)\in M'$. Since $(u,x)$ is the only $M'$-edge in the $C_4$ induced by $V(H)$ in $G$, it contradicts the fact that the type (2) clause corresponding to that $C_4$ is satisfied by the assignment $a$.


Second, let $H$ be an induced subgraph of $G$ isomorphic to $C_6$. By the definition of $E'$, we have that $E(H)\subseteq E'$ and consequently
$M\cap E(H) = M'\cap E(H)$. The fact that the clauses of type (3) corresponding to $H$ are satisfied by $a$ implies that $H - M$ is a union of three disjoint edges. 

Finally, let $H$ be an induced subgraph of $G$ isomorphic to the domino (with edges labeled as in the right side of Figure~\ref{fig:C6-domino-fork}).
By the definition of $E'$, we again have $E(H)\subseteq E'$ and thus $M\cap E(H) = M'\cap E(H)$. The fact that the clauses of type (4) corresponding to $H$ are satisfied by $a$ implies that $M'\cap \{e_2,e_3\}\neq \emptyset$ and $M'\cap \{e_5,e_6\}\neq \emptyset$.
We may assume by symmetry that $M'\cap \{e_2,e_3\} = \{e_2\}$.
The fact that the clause of type (2) is satisfied corresponding to the $C_4$ with edge set $\{e_1,e_2,e_6,e_7\}$
and the $2$-matching $\{e_2,e_6\}$ implies that $e_6\in M'$.
Consequently, $M\cap E(H) = M'\cap E(H) = \{e_2,e_6\}$ and the desired condition holds.
This proves that $M$ is a feasible matching in $G$ and completes the proof of the lemma.
\end{proof}

The correctness of the algorithm follows from Theorem~\ref{thm:readability-2-characterization}.
We can compute formula $\varphi$ from a given graph $G$ in polynomial time. The $2$SAT problem is solvable in linear time~\cite{MR526451}, and clearly, all the other steps of the algorithm can be implemented to run in polynomial time.
The method given above can easily be modified so that it also efficiently computes an overlap labeling of length at most~$2$ in case of a yes instance.
\end{proof}

\section{Readability of grids and grid graphs}\label{sec:grids}
In this section, we determine the readability of grids by proving Theorem~\ref{thm:rgrid}.
%
We first look at toroidal grids, which are closely related to grids.
For positive integers $m \geq 3$ and $n \geq 3$, the \textit{toroidal grid} ${\it TG}_{m,n}$ is obtained from the grid $G_{m,n}$ by adding edges $((i,0),(i,n-1))$ and $((0,j),(m-1,j))$ for all $i\in \{0,\dots,m-1\}$ and $j\in\{0,\dots,n-1\}$. (See Figure~\ref{fig:grid_graph_4x4} for an example.)
	The graph ${\it TG}_{m,n}$ is bipartite if and only if $m$ and $n$ are both even.
	In this case, a bipartition can be obtained by setting $V({\it TG}_{m,n}) = V_s \cup V_p$ where $V_s = \{(i,j)\in V({\it TG}_{m,n}): i+j \equiv 0\pmod 2\}$ and
	$V_p = \{(i,j)\in V({\it TG}_{m,n}): i+j \equiv 1\pmod 2\}$.

\begin{lemma}
	\label{lem:upper-bound-TG}
	For all integers $n>0$, we have $r({\it TG}_{4n,4n})\le 3$.
\end{lemma}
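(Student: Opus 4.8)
The plan is to exhibit an explicit overlap labeling of ${\it TG}_{4n,4n}$ of length at most~$3$ over a finite alphabet, where each vertex has degree exactly $4$. The idea is to partition the edge set into three classes according to the ``direction'' of the edge in the torus, and then assign to each vertex a string of length~$3$ whose individual symbols encode which of its four incident edges lie in which class. Concretely, every vertex $v=(i,j)$ has two ``horizontal'' incident edges (to $(i\pm 1,j)$, indices mod $4n$) and two ``vertical'' incident edges (to $(i,j\pm 1)$, indices mod $4n$). Exploiting the periodicity of length $4$ in each coordinate, I would color edges using the residues $i \bmod 4$ and $j \bmod 4$, which gives a small number of edge types arranged periodically; the labels are then read off from a fixed finite lookup table indexed by $(i\bmod 4, j\bmod 4)$ together with the parity $i+j\bmod 2$ that determines whether $v\in V_s$ or $v\in V_p$.

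\textbf{Key steps in order.} First I would fix the bipartition $V_s,V_p$ as in the statement and orient each edge from its $V_s$-endpoint to its $V_p$-endpoint, so that an overlap labeling must have: a nonempty suffix of $\ell(u)$ equal to a prefix of $\ell(v)$ whenever $(u,v)\in E$, and no such matching suffix/prefix otherwise. Second, I would define the three edge classes $E_1,E_2,E_3$ by a periodic rule in $(i\bmod 4, j\bmod 4)$; the design goal is that each vertex is incident to exactly one or two edges of each class in a way that can be ``witnessed'' by a single position of its length-$3$ string. Third, I would write down the explicit label of each vertex as a function of $(i\bmod 4,\, j\bmod 4)$ — a table of at most $16$ (or, using parity, $8$) distinct strings of length $3$ over a fixed small alphabet. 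Fourth, I would verify correctness: (i) every edge of ${\it TG}_{4n,4n}$ is realized as an overlap of the prescribed length, and (ii) no non-edge is realized. Because the construction is fully periodic with period $4$ in each coordinate, both checks reduce to a finite case analysis over the $4\times 4$ pattern of vertices and the edges among a bounded neighborhood, independent of $n$; I would organize this as: overlaps of length $1$ realize exactly one edge class, overlaps of length $2$ another, and overlaps of length $3$ (i.e.\ equal labels) the third, so that distinctness of the label patterns across classes rules out spurious overlaps.

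\textbf{Main obstacle.} The hard part is the design of the three edge classes and the length-$3$ label table so that \emph{simultaneously} all four edges at every vertex are produced and no two vertices at torus-distance $\ge 2$ accidentally overlap. The difficulty is that an overlap of length $1$ only constrains one end-symbol, so many vertices share that symbol, and one must ensure that the length-$2$ and length-$3$ overlaps (which are much more restrictive) are exactly the ones intended; this is where the periodicity and the careful choice of which direction each class ``points'' matters. Once the table is fixed, the verification is a finite — though somewhat tedious — check that every edge type in the $4\times 4$ fundamental domain is covered and that each of the finitely many pairs of label-strings overlaps only in the prescribed way. The boundary/wraparound edges require no special treatment precisely because we work on the torus, which is why the bound is clean for ${\it TG}_{4n,4n}$; passing back to ordinary grids is then handled separately via Lemma~\ref{lem:induced-subgraph}(a) by embedding $G_{m,n}$ into a large enough toroidal grid.
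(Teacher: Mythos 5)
There is a genuine gap, and it lies at the heart of your plan: you propose to read each vertex's length-$3$ label off a \emph{fixed finite lookup table} indexed by $(i\bmod 4,\, j\bmod 4)$ (at most $16$ distinct labels, independent of $n$). This cannot work. In any overlap labeling, two vertices of the same part that receive identical labels have identical overlap behavior, hence identical neighborhoods, i.e., they are twins; but ${\it TG}_{4n,4n}$ is twin-free, so all $8n^2$ vertices of $V_s$ (and likewise of $V_p$) must receive pairwise distinct labels. A table with constantly many entries fails already for $n=2$: in ${\it TG}_{8,8}$ the vertices $(0,0)$ and $(4,0)$ would get the same label, and then the label of $(0,0)$ would overlap the label of $(4,1)$, a neighbor of $(4,0)$ but not of $(0,0)$, creating a spurious edge. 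More generally, labels of length $3$ over a fixed alphabet give only $O(1)$ distinct strings, so for large $n$ spurious overlaps between far-apart vertices are unavoidable. The statement places no bound on the alphabet, and indeed any correct labeling here must use an alphabet whose size grows with $n$ (at least $\Omega(n^{2/3})$ by the counting above).

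Your periodicity intuition is sound for the \emph{edge decomposition} but not for the \emph{labels}. The paper's proof does use a periodic decomposition: the edge set is split into a union $G_0$ of vertex-disjoint $4$-cycles (``squares''), placed periodically so that every vertex lies in exactly one square, plus the leftover horizontal edges $M_1$ and leftover vertical edges $M_2$, which are perfect matchings. The crucial non-periodic ingredient is that each square receives a \emph{unique} identifier from an alphabet of size $4n^2$, and the label of $u$ is the triple of identifiers of the square of $u$, the square of $M_1(u)$, and the square of $M_2(u)$ (written in one order on $V_s$ and reversed on $V_p$). Correctness then rests on global injectivity facts — e.g., any two squares are joined by at most one $M_1$ edge, and $M_2$-partners are matched by $M_1$ into the same square — which is exactly what rules out the unintended length-$2$ and length-$3$ overlaps that your finite table cannot control. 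To repair your proof you would need to replace the lookup table by vertex-dependent labels built from such globally unique identifiers; the reduction of the verification to a finite case analysis over the $4\times 4$ fundamental domain also disappears once labels are no longer periodic.
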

\begin{proof}
	Fix $n$ and let $G={\it TG}_{4n,4n}$. Each vertex $u$ of $G$ has associated coordinates $(u_1,u_2)$, where $u_1,u_2\in\{0,1,\dots,4n-1\}$. All arithmetic on coordinates will be performed modulo $4n$. By Lemma~\ref{lem:induced-subgraph}{(c)}, we may assume without loss of generality the bipartition $(V_s,V_p)$ given above.
	
	We decompose $G$ into three subgraphs. The first subgraph consists of squares. A {\em square} $S_u$ for a vertex $u$ of $G$ is the subgraph of $G$ induced by vertices $\{u, u+(0,1), u+(1,0),u+(1,1)\}$. The subgraph $G_0$ of $G$ is the union of all squares $S_u$, where either
	(1) $u_1$ is divisible by 4 and $u_2$ is divisible by 2, or
	(2) $u_1 + 2$ is divisible by 4 and $u_2 + 1$ is divisible by 2.
	\begin{figure}[t]
		\begin{center}
			\includegraphics[width=0.7\linewidth]{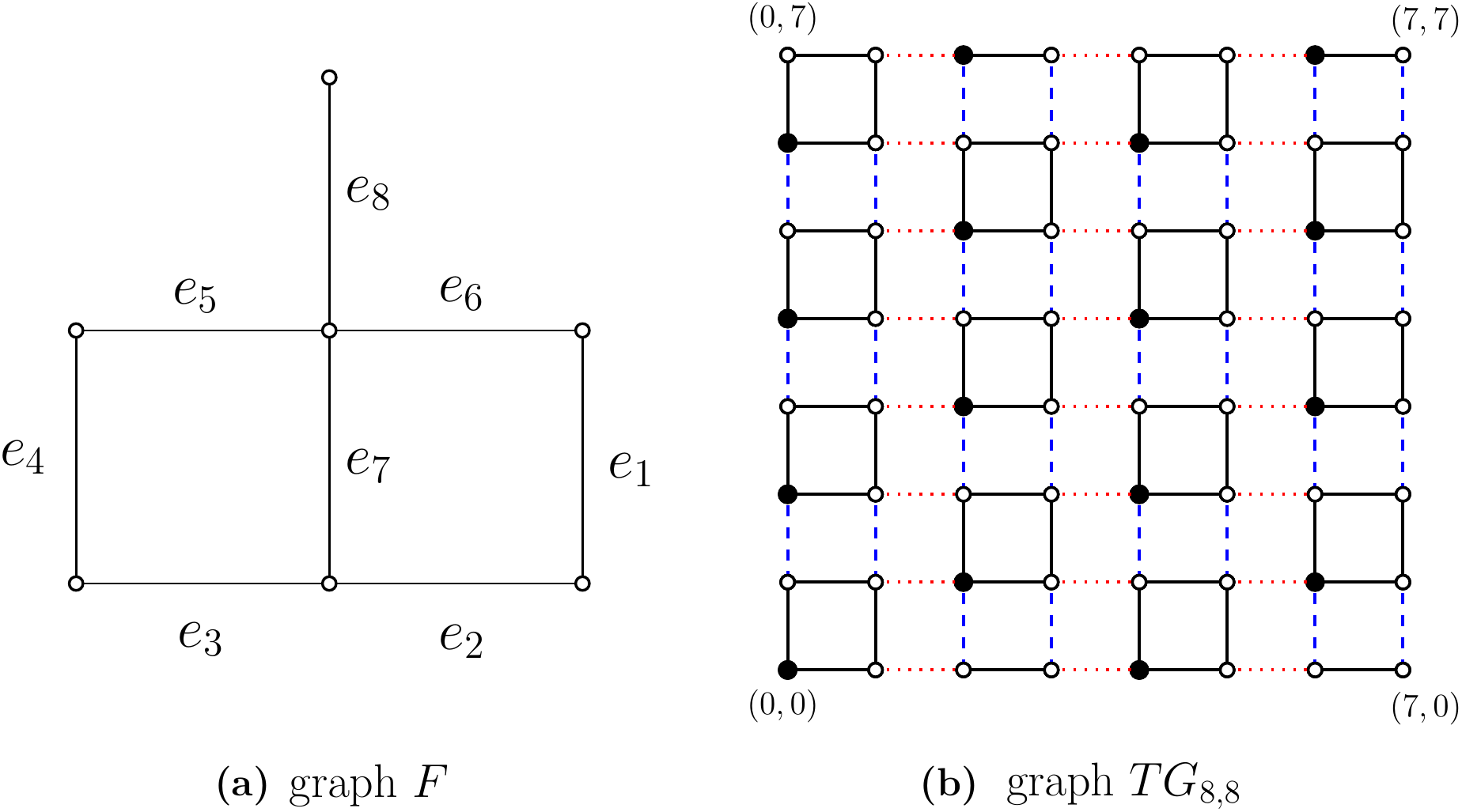}
		\end{center}
		\caption{{(a)} The graph $F$.   {(b)} The graph $TG_{8,8}$ (with edges between the extreme layers omitted in the drawing) where the lower left endpoints of the squares in $G_0$ are marked using solid dots, and the edges in $M_1$ and $M_2$ are drawn using (red) dotted and (blue) dashed lines, respectively.} \label{fig:squares_in_G0}
	\end{figure}

	We assign each square a unique identifier from the range $\{0,1,\dots, 4n^2 - 1\}$.
	Observe that each vertex $u$ of $G$ belongs to exactly one square in $G_0$,
	and we use $\ell_0(u)$ to denote the identifier of the square in $G_0$ to which $u$ belongs.
	We divide the edges of $G$ into {\it horizontal} and {\it vertical} ones respectively, according to whether they connect a pair of vertices that differ in their first, resp., second coordinates.
	Next, we define $M_1$ (respectively, $M_2$) to be the set of all horizontal (respectively, vertical) edges of $G - E(G_0)$.
	For $i\in\{1,2\}$, we use $M_i(u)$ to denote the vertex matched to $u$ in $M_i$. Figure~\ref{fig:squares_in_G0} (b) illustrates the graph $TG_{8,8}$ (without the dotted edges, for simplicity) where the lower left endpoints of the squares in $G_0$ are marked using black dots, and the edges in $M_1$ and $M_2$ are drawn using dotted and dashed lines, respectively. 
	
	We now define a labeling $\ell$ of $G$.
	For each vertex $u$ of $G$, define $\ell_1(u)=\ell_0(M_1(u))$ and $\ell_2(u)=\ell_0(M_2(u)).$
	If $u\in V_s$ then define $\ell(u)=\ell_2(u)\ell_1(u)\ell_0(u)$; if $u\in V_p$ then define $\ell(u)=\ell_0(u)\ell_1(u)\ell_2(u),$ i.e., the same characters, but in reverse order.
	The following claim shows that $\ell$ is an overlap labeling and,
	since $\ell$ has labels of length 3,
	proves the lemma.
	\begin{claim}\label{claim:labeling-of-toroidal-graph}
		Labeling $\ell$ is an overlap labeling of $G$.
	\end{claim}
	\begin{proof}
		First, we make three observations about $G_0$, $M_1$, and $M_2$.
		\begin{observation}\label{obs:perfect-match}
			Each edge of $G$ is in exactly one of $E(G_0), M_1,$ and $M_2$.
			Both $M_1$ and $M_2$ are perfect matchings in $G$.
		\end{observation}
		
		\begin{observation}\label{obs:offset-between-columns}
			If $(u,v)\in M_2$ then $\ell_0(M_1(u))= \ell_0(M_1(v)),$
			i.e., $M_1$ matches $u$ and $v$ to vertices in the same square of $G_0$.
		\end{observation}
		\begin{observation}\label{obs:unique-edge-ids}
			Any pair of squares in $G_0$ is connected by at most~one edge in $M_1$. That is, for all pairs $(id_1,id_2)$ of square ids, at most~one edge $(u,v)\in M_1$ satisfies $\ell_0(u)=id_1$ and $\ell_0(v)=id_2$.
		\end{observation}

		First, we show that, for every edge $(u,v)$ of $G$, where $u\in V_s$ and $v\in V_p$, the label $\ell(u)$ overlaps the label $\ell(v)$. By Observation~\ref{obs:perfect-match}, each edge of $G$ is in one of $G_0, M_1$ and $M_2$. If $(u,v)$ is in $G_0$, then $u$ and $v$ belong to the same square of $G$ and, by construction $\ell_0(u)=\ell_0(v)$. That is,
		$$\suf_1(\ell(u))=\ell_0(u)=\ell_0(v)=pre_1(\ell(v)).$$
		%
		If $(u,v)\in M_1$, then $\ell_1(u)=\ell_0(v)$ and $\ell_1(v)=\ell_0(u),$ by the definition of $\ell_1$. Therefore, $$\suf_2(\ell(u))=\ell_1(u)\ell_0(u)=\ell_0(v)\ell_1(v)=\pre_2(\ell(v)).$$
		If $(u,v)\in M_2$, then $\ell_2(u)=\ell_0(v)$ and $\ell_2(v)=\ell_0(u),$ by the definition of $\ell_2$. By Observation~\ref{obs:offset-between-columns} and the definition of $\ell_1$, we get that $\ell_1(u)=\ell_1(v)$. That is, $$\ell(u)=\ell_2(u)\ell_1(u)\ell_0(u)=\ell_0(v)\ell_1(v)\ell_2(v)=\ell(v).$$
		In all three cases $\ell(u)$ overlaps $\ell(v)$.
		
		\medskip
		
		It remains to show that if, for $u\in V_s$ and $v\in V_p,$ label $\ell(u)$ overlaps label $\ell(v)$ then $(u,v)$ is an edge in $G$. Since labels $\ell(u)$ and $\ell(v)$ have length 3, the overlap from $\ell(u)$ to $\ell(v)$ can be of length 1, 2 or 3.
		If $\suf_1(\ell(u))=\pre_1(\ell(v))$ then $\ell_0(u)=\ell_0(v)$, that is, $u$ and $v$ are in the same square of $G_0$. Hence, $(u,v)$ is an edge in $G_0$ and, consequently, in $G$.
		
		If $\suf_2(\ell(u))=\pre_2(\ell(v))$ then $\ell_1(u)=\ell_0(v)$ and $\ell_0(u)=\ell_1(v)$. By the definition of $\ell_1$, this implies that both $(u, M_1(u))$ and $(M_1(v),v)$ connect squares of $G_0$ with identifiers $\ell_0(u)$ and $\ell_0(v)$. By Observation~\ref{obs:unique-edge-ids}, $(u, M_1(u))$ is the same edge as $(M_1(v),v)$, namely, $(u,v)$. Hence, $(u,v)$ is in $M_1$ and, consequently, in $G$.
		
		Finally, suppose $\ell(u)=\ell(v)$. Then $\ell_2(u)\ell_1(u)\ell_0(u)=\ell_0(v)\ell_1(v)\ell_2(v).$ Since $\ell_2(u)=\ell_0(v)$ and $\ell_0(u)=\ell_2(v)$, it follows that $(u, M_2(u))$ and $(M_2(v),v)$ are vertical edges connecting the same pair of squares in $G_0$. Since $\ell_1(u)=\ell_1(v)$, we have that $M_1(u)$ and $M_1(v)$ belong to the same square in $G_0$. Both conditions can hold only if  $(u, M_2(u))$ and $(M_2(v),v)$ are the same edge, namely, $(u,v)$. Hence, $(u,v)$ is in $M_2$ and, consequently, in $G$.
		In all cases, we proved that $(u,v)$ is an edge of $G$. 
	\end{proof}
	
	This completes the proof of Lemma~\ref{lem:upper-bound-TG}.
	\end{proof}

We can now prove Theorem~\ref{thm:rgrid}, about the readability of $G_{m,n}$.
We first recall the following simple observation (which follows, e.g., from~\cite[Theorem 4.3]{ChikhiMMR16}).

\begin{lemma}\label{lem:small-readability}
	A bipartite graph $G$ has: (i) $r(G) = 0$ if and only if $G$ is edgeless, and
	(ii) $r(G) \le 1$ if and only if $G$ is $P_4$-free (equivalently: a disjoint union of bicliques).
\end{lemma}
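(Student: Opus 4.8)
The plan is to derive both parts directly from the definition of an overlap labeling, with part (ii) resting on the elementary structural fact recorded parenthetically in the statement: a bipartite graph is $P_4$-free exactly when it is a disjoint union of complete bipartite graphs. A self-contained argument is short, so I would not lean on \cite[Theorem 4.3]{ChikhiMMR16}. Part (i) is immediate from the definitions: if $r(G)=0$ then every label is the empty string, and since an overlap of strings $x,y$ requires an index $i$ with $1\le i\le\min\{|x|,|y|\}$, no two empty strings overlap, so $G$ is edgeless; conversely, labeling every vertex of an edgeless graph by the empty string is a valid overlap labeling of length $0$, so $r(G)=0$.

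For part (ii) I would first prove the structural equivalence. If $G$ is a disjoint union of bicliques, then no four vertices induce a $P_4$: in a path $a-b-c-d$ the endpoints $a$ and $d$ lie in opposite parts of the bipartition, so if $ab,bc,cd$ are edges then $a,b,c,d$ all lie in one biclique and $ad\in E$ as well, hence the induced subgraph on $\{a,b,c,d\}$ is not a path. Conversely, suppose $G$ is $P_4$-free and let $H$ be a connected component that is not complete bipartite; pick $u,v$ in opposite parts of $H$ with $uv\notin E$ and take a shortest path from $u$ to $v$, say $w_0=u,w_1,w_2,w_3,\dots$, whose length is odd and at least $3$. Then $\{w_0,w_1,w_2,w_3\}$ induces a $P_4$, since $w_0w_2$ and $w_1w_3$ are non-edges (same part) and $w_0w_3$ is a non-edge (otherwise the path would not be shortest, or it would be the forbidden edge $uv$) --- a contradiction. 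Hence every component of a $P_4$-free bipartite graph is a biclique.

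It then remains to connect readability at most $1$ with this structure. For the ``if'' direction, given $G$ a disjoint union of bicliques, I assign to the $i$-th biclique that contains an edge a distinct character $c_i$, label all its vertices by $c_i$, and label every isolated vertex by the empty string; then for $u\in V_s$ and $v\in V_p$, the label $\ell(u)$ overlaps $\ell(v)$ iff $\suf_1(\ell(u))=\pre_1(\ell(v))$ is defined and equal, i.e.\ iff $u$ and $v$ lie in a common biclique, i.e.\ iff $uv\in E$, so $r(G)\le 1$. For the ``only if'' direction, take an overlap labeling $\ell$ of length at most $1$; every label is empty or a single character, so for $u\in V_s$ and $v\in V_p$ we have $uv\in E$ iff $\ell(u)=\ell(v)$ is a nonempty single character. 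Partitioning the non-isolated vertices according to their common character then exhibits $G$ as a disjoint union of complete bipartite graphs together with isolated vertices, hence $P_4$-free.

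There is no genuine obstacle in this lemma; the content is bookkeeping. The only point that needs care is the ``if'' direction of part (ii): one must use pairwise distinct characters across distinct bicliques and the empty string (not a fresh character) for each isolated vertex, so that the only overlaps produced are those inside a single biclique; and throughout one relies on the convention fixed in the definition of overlap --- that an overlap has length at least $1$ --- which is exactly what makes empty labels inert.
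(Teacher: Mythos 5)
Your proof is correct. The paper itself offers no argument for this lemma---it is merely recalled as a simple observation that ``follows, e.g., from \cite[Theorem 4.3]{ChikhiMMR16}''---so your self-contained derivation is a different (and more explicit) route: you prove the bipartite equivalence between $P_4$-freeness and being a disjoint union of bicliques directly, extracting an induced $P_4$ from the first four vertices of a shortest path between a nonadjacent pair in opposite parts (of odd length at least $3$), and you then translate readability at most $1$ into the observation that an overlap between labels of length at most $1$ is exactly equality of nonempty single characters; part (i) via empty labels is likewise fine. What your approach buys is independence from the cited prior work, at the cost of a page of bookkeeping the paper avoids by citation. One small correction to your closing remark: using the empty string for isolated vertices is sufficient but not necessary---giving each isolated vertex its own fresh character also produces no spurious overlaps, since a length-$1$ overlap forces equality of characters---so the parenthetical ``not a fresh character'' overstates the constraint; this does not affect the correctness of the labeling you actually construct.
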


\begin{proof}[Proof of Theorem~\ref{thm:rgrid}]
	First, by Lemma~\ref{lem:small-readability}, $r(G_{m,n})$ is $0$ if $m=n=1$ and positive, otherwise.
	Second, when $(m,n)\in \{(1,2),(1,3),(2,2)\}$, the graphs $G_{m,n}$ are isomorphic to $K_{1,1}, K_{1,2},$ and $K_{2,2}$, respectively. Thus, by Lemma~\ref{lem:small-readability}, their readability is~$1$.
	
	Third, when $m+n\geq 5$, the grid $G_{m,n}$ contains an induced $P_4$, implying that $r(G_{m,n})\geq 2.$ By
	Theorem~\ref{thm:readability-2-characterization}, a twin-free bipartite graph $G$ has readability at most~2 if and only if
	$G$ has a feasible matching. (See Definition~\ref{def:feasible-matching}.) When $m+n\geq 5$, the grid $G_{m,n}$ is twin-free. If $m=2$ and $n\ge 3$, then $M=\{((i,j),(i,j+1))\mid i\in\{0,1\}$ and $j\in\{0,\dots,n-2\}$ is even$\}$ is a feasible matching in $G_{m,n}$, so $r(G_{m,n})= 2$.
	If $m = 1$ and $n\ge 4$, then $G_{m,n}$ is isomorphic to a path of length at least three. Since its maximum degree is 2, we get $r(G_{m,n})\leq 2$, by Corollary~\ref{cor:path-cycle}. Thus, $r(G_{m,n})= 2$.
	
	To show that $r(G_{m,n})\leq 3$ for $m\geq 3$ and $n\geq 3$,
	we observe that $G_{m,n}$ (for $m\leq n$) is an induced subgraph of ${\it TG}_{4n,4n}$.
	By Lemmas~\ref{lem:induced-subgraph}{(a)} and \ref{lem:upper-bound-TG},
	we have that $r(G_{m,n}) \leq r({\it TG}_{4n,4n})\leq 3$.
	
	To show that $r(G_{m,n})\ge 3$,
	let $F$ be the graph obtained by taking the graph $G_{3,2}$ and adding a new vertex adjacent to one of the degree-$3$ vertices of $G_{3,2}$; see Figure~\ref{fig:squares_in_G0}(a). Clearly, $F$ is a bipartite graph and an induced subgraph of $G_{m,n}$.
	Since $F$ is also twin-free, we can prove that $r(F)>2$ by applying
	Theorem~\ref{thm:readability-2-characterization}, provided we show that $F$ does not have a feasible matching.
	Assume the edge labeling as in Figure~\ref{fig:squares_in_G0}(a) and suppose for a contradiction that $F$ has a feasible matching $M$. The third condition in Definition~\ref{def:feasible-matching} implies that $M\cap (E(F)\setminus \{e_8\}) \in \{\{e_2,e_6\},\{e_3,e_5\}\}$. By symmetry, we may assume that $M\cap (E(F)\setminus \{e_8\}) = \{e_2,e_6\}$.
	Since $M$ is a matching, we have $e_8\not\in M$. But now the graph $F-M$ contains an induced $P_4$ with edge set $\{e_4,e_5,e_8\}$, a contradiction to the fact that $M$ is feasible. This shows that $r(F)\ge 3$.
	%
	By Lemma~\ref{lem:induced-subgraph}{(a)},
	$r(G_{m,n})\ge r(F) \ge 3$ if $m\ge 3$ and $n\ge 3$.
\end{proof}

\section{Conclusion}


In this work we gave several results on families of $n$-vertex bipartite graphs with readability $o(n)$. The results were obtained by developing new or applying a variety of known techniques to the study of readability. 
These include a graph theoretic characterization in terms of matchings, a reduction to 2SAT, an explicit construction of overlap labelings analyzed via number theoretic notions, and a new lower bound applicable to dense graphs with a large number of distinct degrees. One of the main specific questions left open by our work is to close the gap between the $\Omega(\log n)$ lower bound and the $\Oh(\sqrt n)$ upper bound on the readability of $n$-vertex bipartite chain graphs.
In the context of general bipartite graphs, it would be interesting to determine the computational complexity of determining whether the readability of a given bipartite graph is at most~$k$, where $k$ is either part of input or a constant greater than $2$, to study the parameter from an approximation point of view, and to relate it to other graph invariants. For instance, for a positive integer $k$, what is the maximum possible readability of a bipartite graph of maximum degree at most~$k$? Another interesting direction would be to study the complexity of various computational problems on graphs of low readability.

\paragraph{Acknowledgments.}

The result of Section~\ref{subsec:cnn-upperbound} was discovered with the help of
	The On-Line Encyclopedia of Integer Sequences~\textregistered{}~\cite{OEIS}.
This work has been supported in part by NSF awards DBI-$1356529$, CCF-$1439057$, IIS-$1453527$, and IIS-$1421908$ to P.M. and by the Slovenian Research Agency (I$0$-$0035$, research program P$1$-$0285$ and research projects N$1$-$0032$, J$1$-$6720$, and J$1$-$7051$) to M.M. The authors S.R. and N.V. were supported in part by NSF grant CCF-1422975 to S.R. The author N.V. was also supported by Pennsylvania State University College of Engineering Fellowship and Pennsylvania State University Graduate Fellowship and in part by NSF grant IIS-$1453527$ to P.M. 
V.J.~did most of his work on the paper while he was an undergraduate student at the University of Primorska. The main idea of the proof of Lemma~\ref{lem:upper-bound-TG} was developed in his final project paper~\cite{Jovicic}.

\bibliography{references}

\begin{thebibliography}{10}

\bibitem{MR526451}
Bengt Aspvall, Michael~F. Plass, and Robert~Endre Tarjan.
\newblock A linear-time algorithm for testing the truth of certain quantified
  {B}oolean formulas.
\newblock {\em Inform. Process. Lett.}, 8(3):121--123, 1979.

\bibitem{BFKK02}
Jacek B{\l}a{\.z}ewicz, Piotr Formanowicz, Marta Kasprzak, and Daniel Kobler.
\newblock On the recognition of de bruijn graphs and their induced subgraphs.
\newblock {\em Discrete mathematics}, 245(1):81--92, 2002.

\bibitem{BFKSW02}
Jacek B{\l}a{\.z}ewicz, Piotr Formanowicz, Marta Kasprzak, Petra Schuurman, and
  Gerhard~J Woeginger.
\newblock Dna sequencing, eulerian graphs, and the exact perfect matching
  problem.
\newblock In {\em Graph-Theoretic Concepts in Computer Science}, pages 13--24.
  Springer, 2002.

\bibitem{BHKW99}
Jacek Blazewicz, Alain Hertz, Daniel Kobler, and Dominique de~Werra.
\newblock On some properties of dna graphs.
\newblock {\em Discrete Applied Mathematics}, 98(1):1--19, 1999.

\bibitem{BM02}
Mar{\'{\i}}lia D.~V. Braga and Joao Meidanis.
\newblock An algorithm that builds a set of strings given its overlap graph.
\newblock In {\em {LATIN} 2002: Theoretical Informatics, 5th Latin American
  Symposium, Cancun, Mexico, April 3-6, 2002, Proceedings}, pages 52--63, 2002.

\bibitem{ChikhiMMR16}
Rayan Chikhi, Paul Medvedev, Martin Milani{\v{c}}, and Sofya Raskhodnikova.
\newblock On the readability of overlap digraphs.
\newblock {\em Discrete Applied Mathematics}, 205:35--44, 2016.

\bibitem{CCJ90}
Brent~N. Clark, Charles~J. Colbourn, and David~S. Johnson.
\newblock Unit disk graphs.
\newblock {\em Discrete {M}athematics}, 86(1-3):165--177, 1990.

\bibitem{DPPS01}
Josep D{\'{\i}}az, Mathew~D. Penrose, Jordi Petit, and Maria~J. Serna.
\newblock Approximating layout problems on random geometric graphs.
\newblock {\em J. Algorithms}, 39(1):78--116, 2001.

\bibitem{GP14}
Theodoros~P. Gevezes and Leonidas~S. Pitsoulis.
\newblock Recognition of overlap graphs.
\newblock {\em Journal of Combinatorial Optimization}, 28(1):25--37, 2014.

\bibitem{MR568909}
Godfrey~H. Hardy and Edward~M. Wright.
\newblock {\em An {I}ntroduction to the {T}heory of {N}umbers}.
\newblock The Clarendon Press, Oxford University Press, New York, fifth
  edition, 1979.

\bibitem{IW95}
Ramana~M. Idury and Michael~S. Waterman.
\newblock A new algorithm for {DNA} sequence assembly.
\newblock {\em {Journal of Computational Biology}}, 2(2):291--306, 1995.

\bibitem{IPS82}
Alon Itai, Christos~H. Papadimitriou, and Jayme~Luiz Szwarcfiter.
\newblock Hamilton paths in grid graphs.
\newblock {\em {SIAM} J. Comput.}, 11(4):676--686, 1982.

\bibitem{Jovicic}
Vladan Jovi\v{c}i\'c.
\newblock {\em Odčitljivost digrafov in dvodelnih grafov (Readability of
  digraphs and bipartite graphs)}.
\newblock 2016.
\newblock Final project paper (in English). University of Primorska, Faculty of
  Mathematics, Natural Sciences and Information Technologies, Koper, Slovenia,
  $2016$. Available online as arXiv:1612.07113 [cs.DM] at
  \url{https://arxiv.org/abs/1612.07113}.

\bibitem{K16}
Marta Kasprzak.
\newblock Classification of de {B}ruijn-based labeled digraphs.
\newblock {\em Discrete Applied Mathematics}, 2016.

\bibitem{LZ07}
Xianyue Li and Heping Zhang.
\newblock Characterizations for some types of {DNA} graphs.
\newblock {\em Journal of Mathematical Chemistry}, 42(1):65--79, 2007.

\bibitem{LZ10}
Xianyue Li and Heping Zhang.
\newblock Embedding on alphabet overlap digraphs.
\newblock {\em Journal of mathematical chemistry}, 47(1):62--71, 2010.

\bibitem{MKS10}
Jason~R Miller, Sergey Koren, and Granger Sutton.
\newblock Assembly algorithms for next-generation sequencing data.
\newblock {\em Genomics}, 95(6):315--327, 2010.

\bibitem{M05}
Eugene~W. Myers.
\newblock The fragment assembly string graph.
\newblock In {\em ECCB/JBI}, page~85, 2005.

\bibitem{NP13}
Niranjan Nagarajan and Mihai Pop.
\newblock Sequence assembly demystified.
\newblock {\em Nature Reviews Genetics}, 14(3):157--167, 2013.

\bibitem{PSW03}
Rudi Pendavingh, Petra Schuurman, and Gerhard~J Woeginger.
\newblock Recognizing dna graphs is difficult.
\newblock {\em Discrete applied mathematics}, 127(1):85--94, 2003.

\bibitem{SGA}
Jared~T Simpson and Richard Durbin.
\newblock Efficient de novo assembly of large genomes using compressed data
  structures.
\newblock {\em Genome Research}, 2011.

\bibitem{OEIS}
Neil J.~A. Sloane.
\newblock The {O}n-{L}ine {E}ncyclopedia of {I}nteger {S}equences, 2016.
\newblock Published electronically at \url{https://oeis.org}.

\bibitem{TU88}
Jorma Tarhio and Esko Ukkonen.
\newblock A greedy approximation algorithm for constructing shortest common
  superstrings.
\newblock {\em Theoretical Computer Science}, 57(1):131--145, 1988.

\bibitem{MR1367739}
Douglas~B. West.
\newblock {\em Introduction to {G}raph {T}heory}.
\newblock Prentice Hall, Inc., Upper Saddle River, NJ, 1996.

\end{thebibliography}

\end{document}